\newcommand{\RR}[1]{{\color{red}#1}}
\newcommand{\RA}[1]{{\color{blue}#1}}
\renewcommand{\RR}[1]{{#1}}
\renewcommand{\RA}[1]{{#1}}
\crefname{proposition}{Prop.}{Props.}
\crefname{theorem}{Thm.}{Thms.}
\crefname{corollary}{Cor.}{Cors.}
\crefname{figure}{Fig.}{Figs.}
\def\paperTitle{%
	On the Complexity of Distance-$d$ Independent Set Reconfiguration%
}
\def\paperAuthor{%
	D.A.~Hoang%
}
\titlerunning{%
	\paperTitle%
}%
\authorrunning{%
	\paperAuthor%
}
\keywords{reconfiguration problem, distance-$d$ independent set, computational complexity, token sliding, token jumping}
\subjclass{%
	05C85, %
	68R10  %
} %
\title{\paperTitle}
\author[1]{%
	Duc~A.~Hoang%
	\thanks{Present address: VNU University of Science, Vietnam National University, 334 Nguyen Trai Road, Thanh Xuan, Hanoi 100000, Viet Nam. Present email: \texttt{hoanganhduc@hus.edu.vn}.}%
	\addAuthInfo{c}{0000-0002-8635-8462}%
}
\affil[1]{%
	Graduate School of Informatics, Kyoto University\protect\\
	Kyoto, 
	Japan\protect\\
	\email{hoang.duc.8r@kyoto-u.ac.jp}
}
\begin{document}
\maketitle

\begin{abstract}
	For a fixed positive integer $d \geq 2$, a \textit{distance-$d$ independent set (D$d$IS)} of a graph is a \RR{vertex-subset} whose distance between any two members is at least $d$.
	Imagine that there is a token placed on each member of a D$d$IS.
	Two D$d$ISs are adjacent under Token Sliding ($\mathsf{TS}$) if one can be obtained from the other by moving a token from one vertex to one of its unoccupied adjacent vertices.
	Under Token Jumping ($\mathsf{TJ}$), the target vertex needs not to be adjacent to the original one.
	The \textsc{Distance-$d$ Independent Set Reconfiguration (D$d$ISR)} problem under $\mathsf{TS}/\mathsf{TJ}$ asks if there is a corresponding sequence of adjacent D$d$ISs that transforms one given D$d$IS into another.
	The problem for $d = 2$, also known as the \textsc{Independent Set Reconfiguration} problem, has been well-studied in the literature and its computational complexity on several graph classes has been known.
	In this paper, we study the computational complexity of \textsc{D$d$ISR} on different graphs under $\mathsf{TS}$ and $\mathsf{TJ}$ for any fixed $d \geq 3$.
	On chordal graphs, we show that \textsc{D$d$ISR} under $\mathsf{TJ}$ is in $\mathtt{P}$ when $d$ is even and $\mathtt{PSPACE}$-complete when $d$ is odd.  
	On split graphs, there is an interesting complexity dichotomy: \textsc{D$d$ISR} is $\mathtt{PSPACE}$-complete for $d = 2$ but in $\mathtt{P}$ for $d=3$ under $\mathsf{TS}$, while under $\mathsf{TJ}$ it is in $\mathtt{P}$ for $d = 2$ but $\mathtt{PSPACE}$-complete for $d = 3$.
	Additionally, certain well-known hardness results for $d = 2$ on perfect graphs and planar graphs of maximum degree three and bounded bandwidth can be extended for $d \geq 3$.
\end{abstract}

\section{Introduction}%
\label{sec:introduction}

Recently, \textit{reconfiguration problems} have attracted the attention from both theoretical and practical viewpoints.
The input of a reconfiguration problem consists of two \textit{feasible solutions} of some \textit{source problem} (e.g., \textsc{Satisfiability}, \textsc{Independent Set}, \textsc{Vertex Cover}, \textsc{Dominating Set}, etc.) and a \emph{reconfiguration rule} that describes an adjacency relation between solutions.
One of the primary goal is to decide whether one feasible solution can be transformed into the other via a sequence of adjacent feasible solutions where each intermediate member is obtained from its predecessor by applying the given reconfiguration rule exactly once.
Such a sequence, if exists, is called a \emph{reconfiguration sequence}.
In general, the (huge) set of \textit{all} feasible solutions is \textit{not} given, and one can verify in polynomial time (with respect to the input's size) the feasibility and adjacency of solutions.
Another way of looking at reconfiguration problems involves the so-called \textit{reconfiguration graphs}---a graph whose vertices/nodes are feasible solutions and their adjacency relation is defined via the given reconfiguration rule.
Naturally, a reconfiguration sequence is nothing but a walk in the corresponding reconfiguration graph.
Readers may recall the classic Rubik's cube puzzle as an example of a reconfiguration problem, where each configuration of the Rubik's cube corresponds to a feasible solution, and two configurations (solutions) are adjacent if one can be obtained from the other by rotating a face of the cube by either $90$, $180$, or $270$ degrees.
The question is whether one can transform an arbitrary configuration to the one where each face of the cube has only one color.
For an overview of this research area, we refer readers to the surveys~\cite{Heuvel13,Nishimura18,MynhardtN19,BousquetMNS22}.

Reconfiguration problems involving \RR{\textit{vertex-subsets} }(e.g., clique, independent set, vertex cover, dominating set, etc.) of a graph have been extensively considered in the literature.
In such problems, to make it more convenient for describing reconfiguration rules, one usually views a \RR{vertex-subset} of a graph as a set of tokens placed on its vertices.
Some well-known reconfiguration rules in this setting are:
\begin{itemize}
	\item Token Sliding ($\sfTS$): a token can only move to one of the unoccupied adjacent vertices;
	\item Token Jumping ($\sfTJ$): a token can move to any unoccupied vertex; and
	\item Token Addition/Removal ($\sfTAR(k)$): a token can either be added to an unoccupied vertex or \RR{removed} from an occupied one, such that the number of tokens is always (upper/lower) bounded by some given positive integer $k$.
\end{itemize}

Let $G$ be a simple, undirected graph.
An \textit{independent set} of a graph $G$ is a set of pairwise non-adjacent vertices.
The \textsc{Independent Set (\RR{IS})} problem, which asks if $G$ has an independent set of size at least some given positive integer $k$, is one of the fundamental $\ttNP$-complete problems in the computational complexity theory~\cite{GareyJohnson79}.
Given an integer $d \geq 2$, a \textit{distance-$d$ independent set} (also known as \textit{$d$-scattered set} 
or sometimes \textit{$d$-independent set}%
\footnote{In fact, the terminology \textit{$d$-independent set} is sometimes used to indicate a \RR{vertex-subset} such that any two members are of distance at least $d+1$ for $d \geq 1$. 
	We note that sometimes a \textit{$d$-independent set} is defined as a \RR{vertex-subset} $S$ such that the maximum degree of the subgraph induced by $S$ is at most $d$. 
	In some other contexts, a \textit{$d$-independent set} is nothing but an independent set of size $d$.}%
) of $G$ is a set of vertices whose pairwise distance is at least $d$.
This ``distance-$d$ independent set'' concept generalizes ``independent set'': an independent set is also a distance-$2$ independent set but may not be a distance-$d$ independent set for $d \geq 3$.
Given a fixed integer $d \geq 2$, the \textsc{Distance-$d$ Independent Set (\RR{D$d$IS})} problem asks if there is a distance-$d$ independent set of $G$ whose size is at least some given positive integer $k$.
Clearly, \textsc{\RR{D$2$IS}} is nothing but \textsc{\RR{IS}}.
It is known that \textsc{\RR{D$d$IS}} is $\ttNP$-complete for every fixed $d \geq 3$ on general graphs~\cite{KongZ93} and even on regular bipartite graphs when $d \in \{3,4,5\}$~\cite{KongZ00}.
Eto~et~al.~\cite{EtoGM14} proved that \textsc{\RR{D$d$IS}} is $\ttNP$-complete for every fixed $d \geq 3$ even for planar bipartite graphs of maximum degree three. 
They also proved that on chordal graphs, \textsc{\RR{D$d$IS}} is in $\ttP$ for any even $d \geq 2$ and remains $\ttNP$-complete for any odd $d \geq 3$.
The complexity of \textsc{\RR{D$d$IS}} on several other graphs has also been studied~\cite{AgnarssonDH03,MontealegreT16,JenaJDN18}.
Additionally, \textsc{\RR{D$d$IS}} and its variants have been studied extensively from different viewpoints, including exact exponential algorithms~\cite{YamanakaKT19}, approximability~\cite{KatsikarelisLP19,EtoGM14,JenaJDN18}, and parameterized complexity~\cite{KatsikarelisLP20}.

In this paper, we take \textsc{\RR{D$d$IS}} as the source problem and initiate the study of \textsc{Distance-$d$ Independent Set Reconfiguration (D$d$ISR)} from the \RR{classic} complexity viewpoint.
The problem for $d = 2$, which is usually known as the \textsc{Independent Set Reconfiguration (ISR)} problem, has been well-studied from both classic and parameterized complexity viewpoints.
\textsc{ISR} under $\sfTS$ was first introduced by Hearn and Demaine~\cite{HearnD05}.
Ito~et~al.~\cite{ItoDHPSUU11} introduced and studied the problem under $\sfTAR$ along with several other reconfiguration problems.
Under $\sfTJ$, the problem was first studied by Kami{\'n}ski~et~al.~\cite{KaminskiMM12}.
Very recently, the study of \textsc{ISR} on directed graphs has been initiated by Ito~et~al.~\cite{ItoIKNOTW22}.
Additionally, research on the structure and realizability of reconfiguration graphs for \textsc{ISR} has been initiated both under $\sfTS$~\cite{AvisH23-1,AvisH23-2} and $\sfTJ$~\cite{BousquetDPT23}.
Readers are referred to~\cite{Nishimura18,BousquetMNS22} for a complete overview of recent developments regarding \textsc{ISR}.
We now briefly mention some known results regarding the computational complexity of \textsc{ISR} on different graph classes.
\textsc{ISR} remains $\ttPSPACE$-complete under any of $\sfTS, \sfTJ, \sfTAR$ on general graphs~\cite{ItoDHPSUU11}, planar graphs of maximum degree three and bounded bandwidth~\cite{HearnD05,Zanden15,Wrochna18} and perfect graphs~\cite{KaminskiMM12}.
Under $\sfTS$, the problem is $\ttPSPACE$-complete even on split graphs~\cite{BelmonteKLMOS21}.
Interestingly, on bipartite graphs, \textsc{ISR} under $\sfTS$ is $\ttPSPACE$-complete while under any of $\sfTJ$ and $\sfTAR$ it is $\ttNP$-complete~\cite{LokshtanovM19}.
On the positive side, \textsc{ISR} under any of $\sfTJ$ and $\sfTAR$ is in $\ttP$ on even-hole-free graphs~\cite{KaminskiMM12} (which also contains chordal graphs, split graphs, interval graphs, trees, etc.), cographs~\cite{Bonsma16}, and claw-free graphs~\cite{BonsmaKW14}.
\textsc{ISR} under $\sfTS$ is in $\ttP$ on cographs~\cite{KaminskiMM12}, claw-free graphs~\cite{BonsmaKW14}, trees~\cite{DemaineDFHIOOUY15}, bipartite permutation graphs and bipartite distance-hereditary graphs~\cite{Fox-EpsteinHOU15}, and interval graphs~\cite{BonamyB17,BrianskiFHM21}.
\RA{Very recently, Bartier et al.~\cite{BartierBM24} proved that \textsc{ISR} under any of $\sfTS$ and $\sfTJ$ is $\ttPSPACE$-complete on $H$-free graphs when $H$ is neither a path nor a subdivision of the claw.
Additionally, they designed a polynomial-time algorithm to solve \textsc{ISR} under $\sfTS$ when the input graph is fork-free.}

\textsc{D$d$ISR} for $d \geq 3$ was first studied by Siebertz~\cite{Siebertz18} from the parameterized complexity viewpoint.
More precisely, in~\cite{Siebertz18}, Siebertz proved that \textsc{D$d$ISR} under $\sfTAR$ is in $\ttFPT$ for every $d \geq 2$ on ``nowhere dense graphs'' (which generalized the previously known result for $d = 2$ of Lokshtanov~et~al.~\cite{LokshtanovMPRS15}) and it is $\mathtt{W[1]}$-hard for some value of $d \geq 2$ on ``somewhere dense graphs'' that are closed under taking subgraphs.

Since the $\sfTJ$ and $\sfTAR$ rules are somewhat equivalent~\cite{KaminskiMM12}, i.e., any $\sfTJ$-sequence between two size-$k$ token-sets can be converted into a $\sfTAR$-sequence between them whose members are token-sets of size at least $k-1$ and vice versa\footnote{They proved the result for \textsc{ISR}, but it is not hard to extend it for \textsc{D$d$ISR}.}, in this paper, we consider \textsc{D$d$ISR} ($d \geq 3$) under only $\sfTS$ and $\sfTJ$.
Most of our results are summarized in Table~\ref{table:our-results}.
In short, we show the following results.
\begin{itemize}
	\item It is worth noting that
	\begin{itemize}
		\item Even though it is well-known that \textsc{\RR{D$d$IS}} on $G$ and \textsc{\RR{IS}} on its $(d-1)$th power (this concept will be defined later) are equivalent, this does \textit{not} necessarily hold for their reconfiguration variants. (Section~\ref{sec:graphs-and-their-powers}.)
		
		\item The definition of D$d$IS implies the triviality of \textsc{D$d$ISR} for large enough $d$ on graphs whose (connected) components' diameters are bounded by some constant, including cographs and split graphs.
		(Section~\ref{sec:bounded-diameter}.)
	\end{itemize}
	
	\item On chordal graphs and split graphs, there are some interesting complexity dichotomies. 
	(Section~\ref{sec:chordal-split}.)
	\begin{itemize}
		\item Under $\sfTJ$ on chordal graphs, \textsc{D$d$ISR} is in $\ttP$ for even $d$ and $\ttPSPACE$-complete for odd $d$.
		\item On split graphs, \textsc{D$d$ISR} under $\sfTS$ is $\ttPSPACE$-complete for $d = 2$~\cite{BelmonteKLMOS21} but in $\ttP$ for $d = 3$. 
		Under $\sfTJ$, it is in $\ttP$ for $d = 2$~\cite{KaminskiMM12} but $\ttPSPACE$-complete for $d = 3$.
	\end{itemize}

	\item 
	\RR{The known results for $d = 2$ on perfect graphs~\cite{KaminskiMM12} and on planar graphs of maximum degree three and bounded bandwidth~\cite{HearnD05,Zanden15,Wrochna18} can be extended for $d \geq 3$.
	(Section~\ref{sec:extend-results}.)}
\end{itemize}

\begin{table}[!ht]
	\centering
	\begin{adjustbox}{max width=\textwidth}
	\begin{tabular}{|c|c|c|c|c|}
		\hline
		\multirow{2}{*}{Graph} & \multicolumn{2}{c|}{{$\sfTS$}} & \multicolumn{2}{c|}{{$\sfTJ$}} \\
		\cline{2-5}
		& $d = 2$ & $d \geq 3$ & $d = 2$ & $d \geq 3$\\
		\hline
		\multirow{4}{*}{chordal} & \multirow{4}{*}{$\ttPSPACE$-C~\cite{BelmonteKLMOS21}} & \multirow{4}{*}{unknown} & \multirow{4}{*}{$\ttP$~\cite{KaminskiMM12}} & even $d$: $\ttP$\\
		& & & & (\cref{cor:chordal-TJ-even-d})\\
		\cline{5-5}
		& & & & odd $d$: $\ttPSPACE$-C\\
		& & & & (\cref{thm:chordal-TJ-odd-d})\\
		\hline
		& \multirow{4}{*}{$\ttPSPACE$-C~\cite{BelmonteKLMOS21}} & & \multirow{4}{*}{$\ttP$~\cite{KaminskiMM12}} & $d = 3$: $\ttPSPACE$-C\\
		split & & $\ttP$ & & (\cref{cor:split-TJ})\\
		\cline{5-5}
		($\subseteq$ chordal) & & \multirow{1}{*}{(\cref{prop:split-TS,prop:bounded-diameter})} & & $d \geq 4$: $\ttP$\\
		& & & & (\cref{prop:bounded-diameter})\\
		\hline
		tree ($\subseteq$ chordal) & $\ttP$~\cite{DemaineDFHIOOUY15} & unknown & $\ttP$~\cite{KaminskiMM12} & $\ttP$ (\cref{cor:interval-TJ})\\
		\hline
		\multirow{2}{*}{perfect} & \multirow{2}{*}{$\ttPSPACE$-C~\cite{KaminskiMM12}} & $\ttPSPACE$-C & \multirow{2}{*}{$\ttPSPACE$-C~\cite{KaminskiMM12}} & $\ttPSPACE$-C \\
		& & (\cref{thm:perfect-TSTJ}) & & (\cref{thm:perfect-TSTJ})\\
		\hline
		\multirow{2}{*}{\begin{tabular}{@{}c@{}}planar $\cap$ subcubic\\ $\cap$ bounded bandwidth\end{tabular}} & $\ttPSPACE$-C & $\ttPSPACE$-C  & $\ttPSPACE$-C & $\ttPSPACE$-C \\
		& \cite{HearnD05,Zanden15,Wrochna18} & (\cref{thm:planar-TSTJ}) & \cite{HearnD05,Zanden15,Wrochna18} & (\cref{thm:planar-TSTJ})\\
		\hline
		cograph ($\subseteq$ perfect) & $\ttP$~\cite{KaminskiMM12} & $\ttP$ (\cref{cor:cographs-TSTJ}) & $\ttP$~\cite{Bonsma16} & $\ttP$ (\cref{cor:cographs-TSTJ})\\
		\hline
		interval ($\subseteq$ chordal) & $\ttP$~\cite{BonamyB17,BrianskiFHM21} & unknown & $\ttP$~\cite{KaminskiMM12} & $\ttP$ (\cref{cor:interval-TJ})\\
		\hline
	\end{tabular}
	\end{adjustbox}
	\caption{\RR{The computational complexity of \textsc{D$d$ISR} ($d \geq 3$) on some graphs considered in this paper. For comparison, we also mention the corresponding known results for $d = 2$. $\ttPSPACE$-C stands for $\ttPSPACE$-complete.}}
	\label{table:our-results}
\end{table}

\section{Preliminaries}%
\label{sec:preliminaries}

For terminology and notation not defined here, see~\cite{Diestel17}.
Let $G$ be a simple, undirected graph with vertex-set $V(G)$ and edge-set $E(G)$. 
For two sets $I, J$, we sometimes use $I - J$ and $I + J$ to indicate $I \setminus J$ and $I \cup J$, respectively.
Additionally, we simply write $I - u$ and $I + u$ instead of $I - \{u\}$ and $I + \{u\}$, respectively. 
The \textit{neighborhood} of a vertex $v$ in $G$, denoted by $N_G(v)$, is the set $\{w \in V(G): vw \in E(G)\}$.
The \textit{closed neighborhood} of $v$ in $G$, denoted by $N_G[v]$, is simply the set $N_G(v) + v$.
Similarly, for a \RR{vertex-subset} $I \subseteq V(G)$, its \textit{neighborhood} $N_G(I)$ and \textit{closed neighborhood} $N_G[I]$ are respectively $\bigcup_{v \in I}N_G(v)$ and $N_G(I) + I$.
The \textit{degree} of a vertex $v$ in $G$, denoted by $\deg_G(v)$, is $\vert N_G(v) \vert$.
The \textit{distance} between two vertices $u, v$ in $G$, denoted by $\dist_G(u, v)$, is the number of edges in a shortest path between them.
For convenience, if there is no path between $u$ and $v$ then $\dist_G(u, v) = \infty$.
The \textit{diameter} of $G$, denoted by $\diam(G)$, is the largest distance between any two vertices.
A \textit{(connected) component} of $G$ is a maximal subgraph in which there is a path connecting any pair of vertices.
An \textit{independent set (IS)} of $G$ is a \RR{vertex-subset} \RR{$I \subseteq V(G)$} such that for any $u, v \in I$, we have $uv \notin E(G)$.
A \textit{distance-$d$ independent set (D$d$IS)} of $G$ for an integer $d \geq 2$ is a \RR{vertex-subset} $I \subseteq V(G)$ such that for any $u, v \in I$, $\dist_G(u, v) \geq d$. 

Unless otherwise noted, we denote by \RR{$(G, I, J, \sfR)$} an instance of \textsc{D$d$ISR} under $\sfR \in \{\sfTS, \sfTJ\}$ where $I$ and $J$ are two distinct D$d$ISs of a given graph $G$, for some fixed $d \geq 2$.
Since \RR{$(G, I, J, \sfR)$} is obviously a no-instance if $\vert I \vert \neq \vert J \vert$, from now on, we always assume that $\vert I \vert = \vert J \vert$.
Imagine that a token is placed on each vertex in a D$d$IS of a graph $G$. 
A \textit{$\sfTS$-sequence} in $G$ between two D$d$ISs $I$ and $J$ is the sequence $\mathcal{S} = \langle I = I_0, I_1, \dots, I_q = J \rangle$ such that for $i \in \{0, \dots, q-1\}$, the set $I_i$ is a D$d$IS of $G$ and there exists a pair $x_i, y_i \in V(G)$ such that $I_i - I_{i+1} = \{x_i\}$, $I_{i+1} - I_i = \{y_i\}$, and $x_iy_i \in E(G)$.
By simply removing the restriction $x_iy_i \in E(G)$, we immediately obtain the definition of a \textit{$\sfTJ$-sequence} in $G$.
Depending on the considered rule $\sfR \in \{\sfTS, \sfTJ\}$, we can also say that $I_{i+1}$ is obtained from $I_i$ by \textit{immediately sliding/jumping} a token from $x_i$ to $y_i$ and write $x_i \reconf[\sfR]{G} y_i$.
As a result, we can also write $\mathcal{S} = \langle x_0 \reconf[\sfR]{G} y_0, \dots, x_{q-1} \reconf[\sfR]{G} y_{q-1} \rangle$.
In short, $\mathcal{S}$ can be viewed as a (ordered) sequence of either D$d$ISs or token-moves.
(Recall that we defined $\mathcal{S}$ as a sequence between $I$ and $J$. As a result, when regarding $\mathcal{S}$ as a sequence of token-moves, we implicitly assume that the initial D$d$IS is $I$.) 
With respect to the latter viewpoint, we say that $\mathcal{S}$ \textit{slides/jumps a token $t$ from $u$ to $v$ in $G$} if $t$ is originally placed on $u \in I_0$ and finally on $v \in I_q$ after performing $\mathcal{S}$. 
The \textit{length} of a $\sfR$-sequence is simply the number of times the rule $\sfR$ is applied.
\RR{In particular, the $\sfR$-sequence $\mathcal{S} = \langle x_0 \reconf[\sfR]{G} y_0, \dots, x_{q-1} \reconf[\sfR]{G} y_{q-1} \rangle$ has length $q$.}

We conclude this section with the following remark: since \textsc{\RR{D$d$IS}} is in $\ttNP$, \textsc{D$d$ISR} is always in $\ttPSPACE$~\cite{ItoDHPSUU11}. 
As a result, to show the $\ttPSPACE$-completeness of \textsc{D$d$ISR}, it is sufficient to construct a polynomial-time reduction from a known $\ttPSPACE$-hard problem and prove its correctness.

\section{Observations}%
\label{sec:observations}

\subsection{Graphs and Their Powers}%
\label{sec:graphs-and-their-powers}

An extremely useful concept for studying distance-$d$ independent sets is the so-called \textit{graph power}.
For a graph $G$ and an integer $s \geq 1$, the \textit{$s$th power of $G$} is the graph $G^s$ whose vertices are $V(G)$ and two vertices $u, v$ are adjacent in $G^s$ if $\dist_G(u, v) \leq s$. 
Observe that $I$ is a distance-$d$ independent set of $G$ if and only if $I$ is an independent set of $G^{d-1}$.
Therefore, \textsc{\RR{D$d$IS}} in $G$ is equivalent to \textsc{\RR{IS}} in $G^{d-1}$.

However, this may \textit{not} apply for their reconfiguration variants.
More precisely, the statement ``the \textsc{D$d$ISR}'s instance \RR{$(G, I, J, \sfR)$} is a yes-instance if and only if the \textsc{ISR}'s instance \RR{$(G^{d-1}, I, J, \sfR)$} is a yes-instance'' holds for $\sfR = \sfTJ$ but not for $\sfR = \sfTS$.
The reason is that under $\sfTJ$ we do not care about edges when performing token-jumps (as long as they result new D$d$ISs), therefore whatever token-jump we perform in $G$ can also be done in $G^{d-1}$ and vice versa.
Therefore, we have \RR{the following proposition}.
\begin{proposition}\label{prop:results-inherit-from-ISR-TJ}
	Let $\calG$ and $\calH$ be two graph classes and suppose that for every $G \in \calG$ we have $G^{d-1} \in \calH$ for some fixed integer $d \geq 2$.
	If \textsc{ISR} under $\sfTJ$ on $\calH$ can be solved in polynomial time, so does \textsc{D$d$ISR} under $\sfTJ$ on $\calG$.
\end{proposition}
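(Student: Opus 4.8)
The plan is to give a polynomial-time many-one reduction that turns an instance of \textsc{D$d$ISR} under $\sfTJ$ on $\calG$ into an instance of \textsc{ISR} under $\sfTJ$ on $\calH$ by passing to the $(d-1)$th power, and then to invoke the assumed polynomial-time algorithm for the latter. Concretely, given an instance $(G, I, J, \sfTJ)$ of \textsc{D$d$ISR} with $G \in \calG$, I would first compute the power graph $G^{d-1}$, then run the hypothesized polynomial-time algorithm for \textsc{ISR} under $\sfTJ$ on the instance $(G^{d-1}, I, J, \sfTJ)$, and return its answer verbatim.

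First I would verify that this procedure is well-defined and runs in polynomial time. Since $d$ is a fixed constant, computing $G^{d-1}$ takes polynomial time: its vertex set is $V(G)$, and one decides adjacency of each pair $u,v$ by testing $\dist_G(u,v) \leq d-1$, which can be read off from a breadth-first search rooted at every vertex. By hypothesis $G^{d-1} \in \calH$, so the resulting instance lives in the right graph class. Moreover, by the graph-power observation established above, $I$ and $J$ are D$d$ISs of $G$ exactly when they are independent sets of $G^{d-1}$; hence $(G^{d-1}, I, J, \sfTJ)$ is a legitimate \textsc{ISR} instance. As $\vert V(G^{d-1}) \vert = \vert V(G) \vert$ and the assumed \textsc{ISR} algorithm is polynomial, the total running time is polynomial in the size of $G$.

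The heart of the argument is correctness: the two instances must have the same answer. I would prove this by observing that their reconfiguration graphs coincide as graphs. The nodes of both reconfiguration graphs are the feasible solutions, and these agree precisely because a vertex-subset is a D$d$IS of $G$ if and only if it is an IS of $G^{d-1}$. For the edges, note that under $\sfTJ$ the adjacency of two feasible solutions depends only on whether one is obtained from the other by relocating a single token to an unoccupied vertex---there is no constraint relating the source and the target of a jump. Thus two feasible solutions are $\sfTJ$-adjacent in the \textsc{D$d$ISR} instance if and only if they are $\sfTJ$-adjacent in the \textsc{ISR} instance, and in fact every $\sfTJ$-sequence in one is, token-for-token, a $\sfTJ$-sequence in the other. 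Consequently a reconfiguration sequence from $I$ to $J$ exists for $(G, I, J, \sfTJ)$ if and only if one exists for $(G^{d-1}, I, J, \sfTJ)$, so the algorithm is correct.

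There is essentially no deep obstacle here; the only point demanding care is the edge-independence of $\sfTJ$, which is exactly what makes the reduction go through. It is instructive to note where the analogous argument for $\sfR = \sfTS$ would break down: a $\sfTS$-move requires the source and target vertices to be adjacent, and an edge of $G^{d-1}$ need not correspond to an edge of $G$, so a slide that is legal in $G^{d-1}$ may have no counterpart in $G$. This is precisely why the claimed equivalence, and hence this proposition, is stated only for $\sfTJ$.
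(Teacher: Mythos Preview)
Your argument is correct and mirrors the paper's own justification: the paper establishes, in the paragraph immediately preceding the proposition, that under $\sfTJ$ the instances $(G,I,J,\sfTJ)$ and $(G^{d-1},I,J,\sfTJ)$ are equivalent because token-jumps ignore edges, and then states the proposition as an immediate consequence. You have simply spelled out the reduction and its polynomial-time bookkeeping more explicitly than the paper does.
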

Recall that the power of any interval graph is also an interval graph~\cite{AgnarssonGH00,ChenC10} and 
\textsc{ISR} under $\sfTJ$ on even-hole-free graphs (which contains interval graphs) is in $\ttP$~\cite{KaminskiMM12}.
Along with Proposition~\ref{prop:results-inherit-from-ISR-TJ}, we immediately obtain the following corollary.
\begin{corollary}\label{cor:interval-TJ}
	\textsc{D$d$ISR} under $\sfTJ$ is in $\ttP$ on interval graphs for any $d \geq 2$.
\end{corollary}

On the other hand, when using token-slides, we need to consider which edges can be used for moving tokens, and clearly $G^{d-1}$ has much more edges than $G$, which means certain token-slides we perform in $G^{d-1}$ cannot be done in $G$.
More precisely, we have \RR{the following proposition}.
\begin{proposition}\label{prop:TS-power-graph}
	Let $d \geq 3$ be a fixed integer.
	There exists a \textsc{D$d$ISR}'s no-instance \RR{$(G, I, J, \sfTS)$} such that the \textsc{ISR}'s instance \RR{$(G^{d-1}, I, J, \sfTS)$} is a yes-instance, where $I, J$ are D$d$ISs of $G$ (and therefore ISs of $G^{d-1}$) of size $k \geq 2$.
\end{proposition}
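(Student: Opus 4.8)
The plan is to exhibit, for each fixed $d \ge 3$, a single concrete instance in which the initial D$d$IS is \emph{frozen} in $G$ (no token can slide at all) yet a single slide becomes available once we pass to $G^{d-1}$. Since a frozen configuration is an isolated vertex of the $\sfTS$-reconfiguration graph, any distinct target instantly yields a no-instance in $G$, whereas the extra edges of $G^{d-1}$ will supply the one move needed to reach that target.

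Concretely, I would let $G$ consist of a ``spine'' path $u = z_0 - z_1 - \cdots - z_d = v$ of length $d$ together with a pendant path of length $d-2$ attached at $z_1$ and ending in a new vertex $w$; note $G$ is a tree, so all distances are realized uniquely along these paths. I set $I = \{u,v\}$ and $J = \{w,v\}$, so $k = 2$. First I would record the three distances that drive everything: $\dist_G(u,v) = d$, $\dist_G(u,w) = 1 + (d-2) = d-1$, and $\dist_G(w,v) = (d-2) + (d-1) = 2d-3$. Because $d \ge 3$ we have $2d-3 \ge d$, so both $I$ and $J$ are genuine D$d$ISs of $G$ (equivalently ISs of $G^{d-1}$) with $|I| = |J| = 2 \ge 2$.

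Next I would argue that $I$ is frozen in $G$ under $\sfTS$. Each of $u$ and $v$ has degree one in $G$ (their unique neighbors being $z_1$ and $z_{d-1}$), so the only candidate slides move the token on $u$ to $z_1$ or the token on $v$ to $z_{d-1}$; both are forbidden since $\dist_G(z_1, v) = \dist_G(z_{d-1}, u) = d-1 < d$. Hence no token can move, $I$ is isolated in the $\sfTS$-reconfiguration graph of $G$, and as $J \ne I$ the instance $(G, I, J, \sfTS)$ is a no-instance. The key design point, which I would flag as the one place to be careful, is that the pendant must be hung at $z_1$ rather than at $u$: attaching it at $u$ would give $u$ a second neighbor at distance $d+1$ from $v$, creating a legal slide in $G$ and destroying the freezing. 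Finally, for the power graph I would observe that $\dist_G(u,w) = d-1$ makes $uw$ an edge of $G^{d-1}$, while $\dist_G(w,v) = 2d-3 \ge d > d-1$ keeps $w$ and $v$ non-adjacent there; thus the single slide $u \reconf[\sfTS]{G^{d-1}} w$ transforms $I$ into $J$, so $(G^{d-1}, I, J, \sfTS)$ is a yes-instance. The only real obstacle is the distance bookkeeping: one must check that the spine is not short-circuited by the pendant and that the inequality $2d-3 \ge d$ (tight at $d=3$) holds, which is precisely where the hypothesis $d \ge 3$ is used. Should the statement be intended for every $k \ge 2$, I would simply add $k-2$ isolated vertices each carrying a token common to $I$ and $J$; these remain fixed and non-interfering, preserving both the freezing in $G$ and the single escaping slide in $G^{d-1}$.
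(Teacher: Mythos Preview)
Your construction is correct and considerably more economical than the paper's. The paper builds, for each $k \geq 2$, a connected graph out of $k$ parallel paths $P_1,\dots,P_k$ of length $d-1$ with endpoints $v_i,w_i$, and then adds auxiliary length-$(d-1)$ paths from the vertex $v_i^\star$ adjacent to $v_i$ to every other $v_j$ (and symmetrically on the $w$-side). With $I=\{v_1,\dots,v_k\}$ and $J=\{w_1,\dots,w_k\}$, both token-sets are frozen by a circular-blocking argument (to move $t_i$ one must first move some $t_j$, which in turn requires moving some $t_\ell$, etc.), while in $G^{d-1}$ the edges $v_iw_i$ appear and one slides $v_i \to w_i$ for $i=1,\dots,k$.

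Your route is genuinely different: a tiny tree with $k=2$, where you only need $I$ to be frozen---the fact that $I$ is isolated in the reconfiguration graph makes \emph{any} distinct $J$ give a no-instance, so you never analyze $J$. What you gain is brevity and, notably, an example that is already a tree (tying in nicely with the paper's open problem in Section~\ref{sec:trees-TS}). What the paper's construction buys is a connected witness for every $k$ simultaneously with both $I$ and $J$ frozen, but for the bare existential statement as written (``size $k \geq 2$'' rather than ``for every $k$'') none of that extra structure is needed; your isolated-vertex padding for larger $k$ is also perfectly valid.
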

\begin{proof}
	We construct a graph $G$ as follows.
	\begin{figure}[!ht]
		\centering
		\begin{adjustbox}{max width=0.5\textwidth}
		\begin{tikzpicture}[every node/.style={circle, draw, thick, fill=white, minimum size=8mm}]

			\node[fill=gray!30!white, draw=none, rectangle, minimum width=1cm, minimum height=11cm, label=above:$P_1$] at (0, -4) {};
			\node[fill=gray!30!white, draw=none, rectangle, minimum width=1cm, minimum height=11cm, label=above:$P_2$] at (2, -4) {};
			\node[fill=gray!30!white, draw=none, rectangle, minimum width=1cm, minimum height=11cm, label=above:$P_k$] at (6, -4) {};

			\node[label=above:$v_1$] (v1) at (0,0) {};
			\node[label=above:$v_2$] (v2) at (2,0) {};
			\node[label=above:$v_k$] (v3) at (6,0) {};
			\node[label={below right:$v_1^\star$}] (vs1) at (0,-2) {};
			\node[label={below right:$v_2^\star$}] (vs2) at (2,-2) {};
			\node[label={below right:$v_k^\star$}] (vs3) at (6,-2) {};
			\node[label=below:$w_1$] (w1) at (0,-8) {};
			\node[label=below:$w_2$] (w2) at (2,-8) {};
			\node[label=below:$w_k$] (w3) at (6,-8) {};
			\node[label={above right:$w_1^\star$}] (ws1) at (0,-6) {};
			\node[label={above right:$w_2^\star$}] (ws2) at (2,-6) {};
			\node[label={above right:$w_k^\star$}] (ws3) at (6,-6) {};
			
			\node[minimum size=6mm, fill=black] at (v1) {};
			\node[minimum size=6mm, fill=black] at (v2) {};
			\node[minimum size=6mm, fill=black] at (v3) {};
			\node[minimum size=6mm, fill=gray] at (w1) {};
			\node[minimum size=6mm, fill=gray] at (w2) {};
			\node[minimum size=6mm, fill=gray] at (w3) {};
			
			\draw[dotted, thick] ([shift={(0.3,0)}]v2.east) -- ([shift={(-0.3,0)}]v3.west);
			\draw[dotted, thick] ([shift={(0.3,0)}]vs2.east) -- ([shift={(-0.3,0)}]vs3.west);
			\draw[dotted, thick] ([shift={(0.3,0)}]w2.east) -- ([shift={(-0.3,0)}]w3.west);
			\draw[dotted, thick] ([shift={(0.3,0)}]ws2.east) -- ([shift={(-0.3,0)}]ws3.west);
			\foreach \i in {1,2,3} {
				\foreach \j in {1,2,3} {
					\ifthenelse{\i=\j}{
						\draw[thick] (v\i) -- (vs\i);
						\draw[thick] (w\i) -- (ws\i);
						\draw[dashed, thick] (vs\i) -- (ws\i);
					}{
						\draw[dashed, ultra thick] (v\i) -- (vs\j);
						\draw[dashed, ultra thick] (w\i) -- (ws\j);
					}
				}
			}
		\end{tikzpicture}
		\end{adjustbox}
		\caption{\RR{Construction of a \textsc{D$d$ISR}'s instance \RR{$(G, I, J, \sfTS)$} satisfying Proposition~\ref{prop:TS-power-graph}. 
		Paths $P_1, \dots, P_k$ are marked by light gray boxes. 
		Paths $Q_{v_i^\star, v_j}$ and $Q_{w_i^\star, w_j}$ of length $d-1$ are represented by bold dashed egdes, where $1 \leq i, j \leq k$ and $i \neq j$. 
		Tokens in $I$ (resp. $J$) are marked with black (resp. gray) color.}}
		\label{fig:TS-power-graph}
	\end{figure}
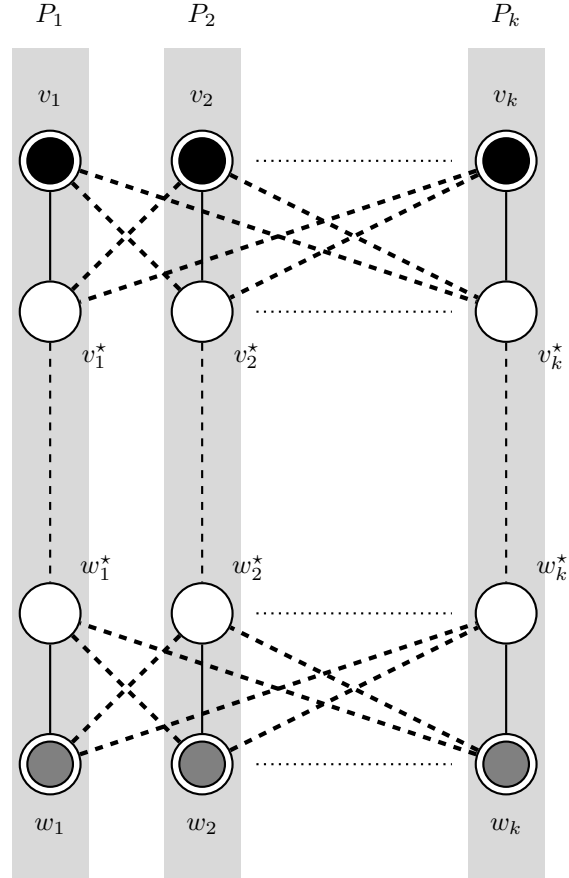
	Let $P_1, \dots, P_k$ be $k$ disjoint paths of length $d - 1$, for some $k \geq 2$.
	For each $i \in \{1, 2, \dots, k\}$, let $v_i$ and $w_i$ be the endpoints of $P_i$ and let $v^\star_i$ (resp. $w^\star_i$) be the unique neighbor of $v_i$ (resp. $w_i$) in $P_i$.
	For each $i \in \{1, 2, \dots, k\}$, join $v^\star_i$ and $v_j$ (resp. $w^\star_i$ and $w_j$), where $j \neq i$ and $1 \leq j \leq k$, by a path $Q_{v^\star_i, v_j}$ (resp. $Q_{w^\star_i, w_j}$) of length $d-1$.
	Let $I = \{v_1, v_2, \dots, v_k\}$, $I^\star = \{v^\star_1, v^\star_2, \dots, v^\star_k\}$, $J = \{w_1, w_2, \dots, w_k\}$, and $J^\star = \{w^\star_1, w^\star_2, \dots, w^\star_k\}$.
	(See \figurename~\ref{fig:TS-power-graph}.)
	
	We first verify that $I$ is indeed a D$d$IS of $G$.
	Similar arguments also hold for $J$.
	Let $v_i, v_j$ be any two distinct members of $I$, where $1 \leq i < j \leq k$.
	From the construction of $G$, one can verify that any path $P^{ij}$ between $v_i$ and $v_j$ is of length
	\begin{itemize}
		\item $d$ if either $V(P^{ij}) \cap I^\star = \{v^\star_i\}$ or $V(P^{ij}) \cap I^\star = \{v^\star_j\}$;
		\item $2(d-1)$ if $\RR{V(P^{ij})} \cap I^\star = \{v^\star_\ell\}$ for some $\ell \in \{1, 2, \dots, k\} - \{i, j\}$;
		\item at least $2(d-3) + 4$ in the remaining cases, since any other path $P^{ij}$ must contain one path of length at least $1$ between $v_i$ and some $v^\star_p$, one path of length at least $1$ between $v_j$ and some $v^\star_q \neq v^\star_p$, exactly two paths of length $d - 3$: one between $v^\star_p$ and $w^\star_p$ and the other between $v^\star_q$ and $w^\star_q$, and one path of length at least $2$ between $w^\star_p$ and $w^\star_q$,  for some $p, q \in \{1, 2, \dots, k\}$.
	\end{itemize}
	Thus, when $d \geq 3$, any path $P^{ij}$ between $v_i$ and $v_j$ has length at least $d$, and the equality happens when either $V(P^{ij}) \cap I^\star = \{v^\star_i\}$ or $V(P^{ij}) \cap I^\star = \{v^\star_j\}$.
	It follows that $\dist_G(v_i, v_j) \geq d$.
	Since this holds for any pair $v_i, v_j \in I$ ($1 \leq i < j \leq k$), the set $I$ is a D$d$IS of $G$.
	
	Next, we prove that \RR{$(G, I, J, \sfTS)$} is a no-instance of \textsc{D$d$ISR}.
	To show this, we prove that no tokens in $I$ can be moved even to one of its adjacent vertices and (by applying similar arguments) so do those in $J$.
	\RR{For each $i \in \{1, 2, \dots, k\}$, let $x_{ij}$ be the unique neighbor of $v_i$ in the path $Q_{v^\star_j, v_i}$ where $j \in \{1, 2, \dots, k\} - i$.}
	From the construction of $G$, $N_G(v_i) = \bigcup_{j \in \{1, 2, \dots, k\} - i}\{x_{ij}\} \cup \{v^\star_i\}$.
	We now prove that for each $i \in \{1, 2, \dots, k\}$, a token $t_i$ on $v_i$ can be slid to one of its neighbors if all other tokens $t_j$ on $v_j$, for $j \in \{1, 2, \dots, k\} - i$, must be moved before moving $t_i$.
	If this claim holds, no tokens in $I$ (and similarly in $J$) can be moved, and therefore \RR{$(G, I, J, \sfTS)$} is a no-instance.
	
	If $t_i$ is moved to $v^\star_i$, since all $Q_{v^\star_i, v_j}$ are of length $d-1$, the tokens $t_j$ must be moved before $t_i$, for $j \in \{1, 2, \dots, k\} - i$.
	If $t_i$ is moved to some $x_{ij}$, since $Q_{v^\star_j, v_i}$ is of length $d - 1$, the token $t_j$ on $v_j$ must be moved before $t_i$, for $j \in \{1, 2, \dots, k\} - 1$.
	By applying the above arguments for $t_j$ instead of $t_i$, we obtain that there must be some token $t_\ell$, where $\ell \in \{1, 2, \dots, k\} - \{i, j\}$ that must be moved before $t_j$, which clearly also before $t_i$.
	By repeatedly applying these arguments, we finally obtain that all other tokens must be moved before $t_i$.
	
	It remains to show that \RR{$(G^{d-1}, I, J, \sfTS)$} is a yes-instance of \textsc{ISR}.
	Observe that for each $i \in \{1, 2, \dots, k\}$, we have $\dist_G(v_i, w_i) = d-1$ and $\dist_G(v_i, w_j) \geq 2d - 3 \geq d$ for every $d \geq 3$ and $j \in \{1, 2, \dots, k\} - i$.
	Therefore, in $G^{d-1}$, there is an edge between $v_i$ and $w_i$, while there is no edge joining $v_i$ and $w_j$, for $i, j \in \{1, 2, \dots, k\}$ and $i \neq j$.
	Thus, $\mathcal{S} = \langle v_1 \reconf[\sfTS]{G^{d-1}} w_1, v_2 \reconf[\sfTS]{G^{d-1}} w_2, \dots, v_k \reconf[\sfTS]{G^{d-1}} w_k \rangle$ is a $\sfTS$-sequence that transforms $I$ into $J$.
\end{proof}

\subsection{Graphs With Bounded Diameter Components}%
\label{sec:bounded-diameter}

The following observation is straightforward.
\begin{proposition}\label{prop:bounded-diameter}
	Let $\calG$ be a graph class such that there is some constant $c > 0$ satisfying $\diam(C_G) \leq c$ for any $G \in \calG$ and any component $C_G$ of $G$.
	Then, \textsc{D$d$ISR} on $\calG$ under $\sfR \in \{\sfTS, \sfTJ\}$ is in $\ttP$ for every $d \geq c+1$.
\end{proposition}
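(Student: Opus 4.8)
The plan is to exploit the fact that the diameter bound forces every D$d$IS to be extremely sparse---containing at most one vertex per connected component---and then to analyze the two rules separately.

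First I would record the key structural observation. Fix $d \geq c+1$ and let $G \in \calG$. Since every component $C_G$ of $G$ satisfies $\diam(C_G) \leq c$, any two vertices $u, v$ lying in the same component have $\dist_G(u,v) \leq c < d$. Consequently, no D$d$IS of $G$ can contain two vertices of a single component; that is, every D$d$IS $I$ satisfies $\vert I \cap V(C) \vert \leq 1$ for each component $C$. This immediately decouples the reconfiguration across components and drives both cases.

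Second, I would treat $\sfR = \sfTS$. A slide moves a token along an edge, and every edge lies inside one component, so tokens never leave their components; hence the number of tokens in each component is invariant along any $\sfTS$-sequence. Combined with the observation above (these counts are $0$ or $1$), the set of components occupied by $I$ must coincide with that occupied by $J$ for a yes-instance. Conversely, if these occupied sets agree, I would slide the tokens one component at a time: inside an occupied component there is a single token, which can reach any target vertex by sliding along a path (the component is connected), and every intermediate configuration is a valid D$d$IS because the moved token is the only one in its component while all remaining tokens sit in other components at distance $\infty \geq d$. Thus $(G,I,J,\sfTS)$ is a yes-instance if and only if $I$ and $J$ occupy the same components, a condition checkable in polynomial time by computing components and their occupancy.

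Third, I would treat $\sfR = \sfTJ$. Here a token may jump within its own component (again always yielding a valid D$d$IS, as it stays the unique token there) or into a currently empty component (safe, since the arriving token is at distance $\infty$ from all others), but never into an already occupied component (that would create two tokens at distance $< d$). Writing $m$ for the number of components, the reconfiguration therefore reduces to transforming one $k$-element set of occupied components into another by replacing a single element at a time while keeping the size $k = \vert I\vert = \vert J\vert$; since $I$ uses $k$ distinct components we have $m \geq k$, and the reconfiguration graph of $k$-subsets of an $m$-set under single-element swaps is connected. Hence $(G,I,J,\sfTJ)$ is always a yes-instance whenever $\vert I \vert = \vert J\vert$, which we already assume, so the procedure simply answers ``yes''. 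Rather than a genuine obstacle, the only delicate points are the scheduling and validity of intermediate sets---for $\sfTS$ that the per-component slides do not interfere (they cannot, being confined to disjoint components), and for $\sfTJ$ the connectivity of the $k$-subset reconfiguration graph, which is standard. Everything else is bookkeeping, and both decision procedures run in polynomial time.
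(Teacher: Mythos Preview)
Your proposal is correct and follows exactly the same approach as the paper: the paper's proof is a two-line remark that any D$d$IS has at most one vertex per component, so under $\sfTJ$ the answer is always ``yes'' and under $\sfTS$ one compares per-component token counts. You have simply spelled out the justifications (invariance of occupied components under $\sfTS$, connectivity of the $k$-subset reconfiguration under $\sfTJ$, and validity of intermediate D$d$ISs) that the paper leaves implicit.
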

\begin{proof}
	When $d \geq c+1$, any D$d$IS contains at most one vertex in each component of $G$, and the problem becomes trivial: under $\sfTJ$, the answer is always ``yes''; under $\sfTS$, compare the number of tokens in each component.
\end{proof}

As a result, on cographs (a.k.a $P_4$-free graphs), one can immediately derive the following corollary.
\begin{corollary}\label{cor:cographs-TSTJ}
	\textsc{D$d$ISR} on cographs under $\sfR \in \{\sfTS, \sfTJ\}$ is in $\ttP$ for any $d \geq 2$.
\end{corollary}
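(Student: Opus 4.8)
The plan is to invoke Proposition~\ref{prop:bounded-diameter} after establishing that every connected cograph has bounded diameter, and then to handle the small value $d = 2$ separately by citing the known results for \textsc{ISR}.

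First I would recall that a cograph is by definition a $P_4$-free graph and that this class is closed under taking induced subgraphs; in particular, every component $C_G$ of a cograph $G$ is itself a connected cograph. The key step is to show $\diam(C_G) \leq 2$ for every such component. Suppose not: then $C_G$ contains two vertices $u, v$ with $\dist_{C_G}(u, v) \geq 3$. Reading off the first four vertices $x_0 x_1 x_2 x_3$ of a shortest $u$--$v$ path (with $x_0 = u$), the minimality of the path forces $x_0x_2, x_0x_3, x_1x_3 \notin E(C_G)$, so $x_0 x_1 x_2 x_3$ induces a $P_4$, contradicting $P_4$-freeness. Hence the constant $c = 2$ satisfies the hypothesis of Proposition~\ref{prop:bounded-diameter} for the class of cographs.

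With this bound in hand, Proposition~\ref{prop:bounded-diameter} immediately yields that \textsc{D$d$ISR} on cographs under $\sfR \in \{\sfTS, \sfTJ\}$ is in $\ttP$ for every $d \geq c + 1 = 3$. It remains only to cover $d = 2$, which lies outside the range $d \geq c+1$. But \textsc{D$2$ISR} is exactly \textsc{ISR}, and on cographs \textsc{ISR} is already known to be in $\ttP$ under $\sfTS$~\cite{KaminskiMM12} and under $\sfTJ$~\cite{Bonsma16}. Combining the two ranges gives the claim for all $d \geq 2$.

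I do not anticipate a genuine obstacle here; the argument is essentially a one-line application of Proposition~\ref{prop:bounded-diameter} once the diameter bound is noted. The only point requiring care is the boundary case $d = 2$, where the diameter-based proposition does not apply and one must instead appeal to the existing $\ttP$ results for \textsc{ISR} on cographs.
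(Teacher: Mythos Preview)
Your proposal is correct and follows essentially the same approach as the paper: cite the known $d=2$ results for \textsc{ISR} on cographs, then apply Proposition~\ref{prop:bounded-diameter} with $c=2$ for $d \geq 3$. The only difference is that you spell out the (standard) argument that a connected cograph has diameter at most two, whereas the paper simply asserts this as well known.
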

\begin{proof}
	It is well-known that the problems for $d = 2$ is in $\ttP$~\cite{KaminskiMM12,Bonsma16}.
	Since a connected cograph has diameter at most two, Proposition~\ref{prop:bounded-diameter} settles the case $d \geq 3$.
\end{proof}

\section{Chordal Graphs and Split Graphs}%
\label{sec:chordal-split}

In this section, we will focus on chordal graphs and split graphs.
Recall that the odd power of a chordal graph is also chordal~\cite{BalakrishnanP83,AgnarssonGH00} and 
\textsc{ISR} under $\sfTJ$ on even-hole-free graphs (which contains chordal graphs) is in $\ttP$~\cite{KaminskiMM12}.
Therefore, it follows from Proposition~\ref{prop:results-inherit-from-ISR-TJ} that \RR{the following corollary holds}.
\begin{corollary}\label{cor:chordal-TJ-even-d}
	\textsc{D$d$ISR} is in $\ttP$ on chordal graphs under $\sfTJ$ for any even $d \geq 2$.
\end{corollary}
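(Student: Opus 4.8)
The plan is to apply Proposition~\ref{prop:results-inherit-from-ISR-TJ} with $\calG$ the class of chordal graphs, thereby reducing the claim to exhibiting a target class $\calH$ on which \textsc{ISR} under $\sfTJ$ is polynomial-time solvable and which contains $G^{d-1}$ for every chordal $G$. Once such an $\calH$ is identified, the corollary follows directly from the proposition, so the whole argument amounts to verifying the two hypotheses (a power-preservation property and tractability of \textsc{ISR} on $\calH$).

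First I would exploit the parity assumption: since $d$ is even, the exponent $d-1$ is odd. I then invoke the fact that every odd power of a chordal graph is again chordal~\cite{BalakrishnanP83,AgnarssonGH00}, so that $G^{d-1}$ is chordal whenever $G$ is. This lets me take $\calH$ to be the class of chordal graphs itself. Second, I would recall that chordal graphs contain no induced cycle of length at least four and are in particular even-hole-free; since \textsc{ISR} under $\sfTJ$ is in $\ttP$ on even-hole-free graphs~\cite{KaminskiMM12}, it is in $\ttP$ on $\calH$. Both hypotheses of Proposition~\ref{prop:results-inherit-from-ISR-TJ} are now satisfied, and its conclusion yields that \textsc{D$d$ISR} under $\sfTJ$ is in $\ttP$ on chordal graphs for even $d$.

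I do not expect any genuine obstacle here: the entire content of the proof is the parity observation, with the remaining steps being citations of known results. It is worth stressing that this parity condition is exactly where the dichotomy lives. For odd $d$ the exponent $d-1$ is even, even powers of chordal graphs need not remain chordal (and the even-hole-freeness needed to transfer tractability can fail), so the preservation step collapses---which is consistent with the $\ttPSPACE$-completeness recorded for odd $d$ in \cref{thm:chordal-TJ-odd-d}. Thus the only care required is to confirm that $d-1$ is odd precisely when $d$ is even, which is immediate.
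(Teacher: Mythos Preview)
Your proposal is correct and follows exactly the same approach as the paper: the paper likewise observes that odd powers of chordal graphs are chordal~\cite{BalakrishnanP83,AgnarssonGH00}, that chordal (hence even-hole-free) graphs admit polynomial-time \textsc{ISR} under $\sfTJ$~\cite{KaminskiMM12}, and then invokes Proposition~\ref{prop:results-inherit-from-ISR-TJ}.
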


In contrast, we have the following theorem.
\begin{theorem}\label{thm:chordal-TJ-odd-d}
	\textsc{D$d$ISR} is $\ttPSPACE$-complete on chordal graphs under $\sfTJ$ for any odd $d \geq 3$.
\end{theorem}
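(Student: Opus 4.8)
The plan is to establish $\ttPSPACE$-hardness, since membership in $\ttPSPACE$ already follows from the remark at the end of \cref{sec:preliminaries}. I would reduce from \textsc{ISR} under $\sfTJ$ on general graphs, which is $\ttPSPACE$-complete~\cite{ItoDHPSUU11}. Writing $d = 2\ell + 1$ (so $\ell \geq 1$ since $d \geq 3$), the key idea is to exploit the observation preceding \cref{prop:results-inherit-from-ISR-TJ} that under $\sfTJ$ a \textsc{D$d$ISR} instance on $G$ is equivalent to the \textsc{ISR} instance on $G^{d-1}$. Thus it suffices, given a general graph $H$ (a hard instance of \textsc{ISR} under $\sfTJ$ with token number $k \geq 2$, which we may assume), to build a \emph{chordal} graph $G$ together with a set $R = \{r_v : v \in V(H)\}$ of ``representative'' vertices such that $G^{d-1}[R] \cong H$ and, crucially, such that no $\sfTJ$-reconfiguration of \textsc{D$d$IS}s of $G$ can gain anything by leaving $R$.

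For the construction I would start from the \emph{incidence split graph} $G_0$ of $H$: take a clique $C = \{c_e : e \in E(H)\}$ on the edges, an independent set $\{s_v : v \in V(H)\}$ on the vertices, and join $s_v$ to $c_e$ exactly when $v$ is an endpoint of $e$. Then $\dist_{G_0}(s_u, s_v) = 2$ if $uv \in E(H)$ and $\dist_{G_0}(s_u, s_v) = 3$ otherwise, since $C$ is a clique. To blow these distances up to the desired values I would attach to each $s_v$ a pendant path of length $\ell - 1$ ending in a new vertex $r_v$ (for $d = 3$ this path is empty and $r_v = s_v$, so $G$ is then itself \emph{split}, which is exactly what yields \cref{cor:split-TJ}). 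Pendant paths create no new cycles, so $G$ stays chordal; and since a leg can only be left through its $s_v$, one gets $\dist_G(r_u, r_v) = 2(\ell-1) + \dist_{G_0}(s_u, s_v)$, that is, $d-1$ for edges of $H$ and $d$ (or $\infty$) for non-edges. Hence $r_u, r_v$ are adjacent in $G^{d-1}$ iff $uv \in E(H)$, so $G^{d-1}[R] \cong H$, and I set $I = \{r_v : v \in I_H\}$ and $J = \{r_v : v \in J_H\}$.

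The forward direction is immediate: any $\sfTJ$-sequence in $H$ is copied token-for-token onto $R$, since $G^{d-1}[R] \cong H$. The backward direction is where the real work lies, and I expect it to be the main obstacle: a $\sfTJ$-sequence in $G^{d-1}$ may temporarily move tokens onto internal leg vertices or onto $C$, and I must argue this never enlarges the set of reachable configurations. I would prove three distance facts. First, each leg $\{s_v, \dots, r_v\}$ has diameter $\ell - 1 < d$, so it holds at most one token. Second, every vertex of $C$ is within distance $\ell + 1 < d$ of all representatives (and of $C$ itself), so a \textsc{D$d$IS} containing a clique vertex is a singleton; hence for $k \geq 2$ no token ever sits on $C$. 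Third, if $uv \in E(H)$ then any two vertices lying on the legs of $u$ and $v$ are at distance at most $2\ell = d-1$, so no \textsc{D$d$IS} occupies the legs of two $H$-adjacent vertices.

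These facts let me define a projection sending each token to the tip $r_v$ of the leg it occupies. By the third fact the projected set is always an independent set of $H$, by the first fact it is well defined, and a single $\sfTJ$-move in $G^{d-1}$ changes the set of occupied legs by at most one element, so it projects to a single $\sfTJ$-move (or a no-op) in $H$. Applying the projection along the whole sequence transforms it into a valid $\sfTJ$-sequence in $H$ from $I_H$ to $J_H$, completing the reduction. The delicate points to verify carefully are the exact distance computations underlying the three facts and the claim that the projection maps each elementary reconfiguration step to a legal step in $H$.
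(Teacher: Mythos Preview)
Your proposal is correct and takes essentially the same approach as the paper: the construction (incidence split graph of $H$ with a clique on $E(H)$, an independent set on $V(H)$, and pendant paths of length $(d-3)/2$ attached to each vertex) is identical to the paper's, including the observation that for $d=3$ the graph is split. The backward direction is also the same in substance; the paper phrases it as modifying the $\sfTJ$-sequence in $G$ so that every jump lands on a tip $r_v$, whereas you phrase it as a projection onto the tips, but these are equivalent.
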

\begin{proof}
	We reduce from the \textsc{ISR} problem, which is known to be $\ttPSPACE$-complete under $\sfTJ$~\cite{ItoDHPSUU11}.
	Let \RR{$(G, I, J, \sfTJ)$} be an \textsc{ISR}'s instance.
	We construct a \textsc{D$d$ISR}'s instance \RR{$(\Gp, \Ip, \Jp, \sfTJ)$} ($d \geq 3$ is odd) as follows.
	We note that a similar reduction was used by Eto~et~al.~\cite{EtoGM14} for showing the $\ttNP$-completeness of \textsc{\RR{D$d$IS}} on chordal graphs for odd $d \geq 3$.
	We first describe how to construct $\Gp$ from $G$.
	\RR{For each vertex $v \in V(G)$, we add $v$ to $V(\Gp)$.
	For each edge $uv \in V(G)$, we add a vertex $x_{uv}$ to $V(\Gp)$ and create an edge in $\Gp$ between $x_{uv}$ and both $u$ and $v$.
	Next, we add an edge in $G^\prime$ between $x_{uv}$ and $x_{u^\prime v^\prime}$ for any pair of distinct edges $uv, u^\prime v^\prime \in E(G)$.
	Finally, for each $v \in V(G)$, we add to $\Gp$ a new path $P_v$ on $(d-3)/2$ vertices and then add an edge in $\Gp$ between $v$ and one of $P_v$'s endpoints.}
	One can verify that $\Gp$ is indeed a chordal graph: it is obtained from a split graph by attaching new paths to certain vertices.
	Clearly this construction can be done in polynomial time.
	(For example, see \figurename~\ref{fig:reduce-ISR-DdISR-chordal}.)
	
	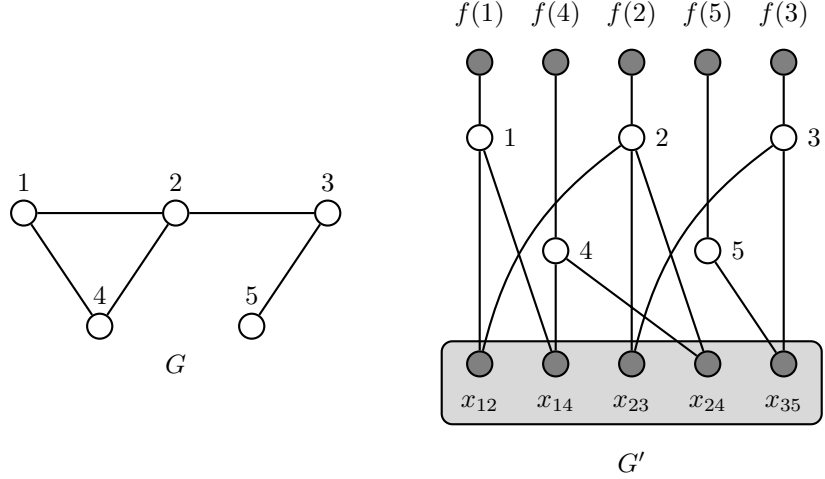
\begin{figure}[!ht]
		\centering
		\begin{adjustbox}{max width=\textwidth}
			\begin{tikzpicture}[every node/.style={circle, draw, thick, fill=white, minimum size=3mm}]
				\begin{scope}
					\foreach \i/\x/\y in {1/0/0,2/2/0,3/4/0,4/1/-1.5,5/3/-1.5} {
						\node[label={[label distance=-0.1cm]above:$\i$}] (\i) at (\x, \y) {};
					}
					\draw[thick] (1) -- (2) -- (3)-- (5) (1) -- (4) -- (2);
					\node[rectangle, draw=none, fill=none] at (2,-2) {$G$};
				\end{scope}
				\begin{scope}[shift={(6,1)}]
					\foreach \i/\x/\y in {1/0/0,2/2/0,3/4/0,4/1/-1.5,5/3/-1.5} {
						\node[label={[label distance=-0.1cm]right:$\i$}] (\i) at (\x, \y) {};
					}
					\foreach \i/\x/\y in {12/0/-3,14/1/-3,23/2/-3,24/3/-3,35/4/-3} {
						\node[fill=gray, label={[label distance=-0.1cm]below:$x_{\i}$}] (\i) at (\x, \y) {};
					}
					\foreach \i/\x/\y in {1/0/1,4/1/1,2/2/1,5/3/1,3/4/1} {
						\node[fill=gray, label={[label distance=-0.1cm]above:$f(\i)$}] (P\i) at (\x, \y) {};
						\draw[thick] (\i) -- (P\i);
					}
					\draw[thick] (12) -- (1) (12) edge[bend left=20] (2) (14) -- (1) (14) -- (4) (23) -- (2) (23) edge[bend left=20] (3) (24) -- (2) (24) -- (4) (35) -- (3) (35) -- (5);
					
					\begin{scope}[on background layer]
						\node[rectangle, draw, thick, fill=gray!30!white, rounded corners, minimum width=5cm, minimum height=1.1cm] at ([shift={(0,-0.25)}]23) {};
					\end{scope}
					
					\node[rectangle, draw=none, fill=none] at (2,-4.3) {$\Gp$};
				\end{scope}
			\end{tikzpicture}
		\end{adjustbox}
		\caption{An example of constructing a chordal graph $\Gp$ from $G$ for $d = 5$. Vertices in a light-gray box form a clique. Vertices in $V(\Gp) - V(G)$ are marked with the gray color.}
		\label{fig:reduce-ISR-DdISR-chordal}
	\end{figure}
	
	For each $u \in V(G)$, we define $f(u) \in V(\Gp)$ to be the vertex whose distance in $\Gp$ from $u$ is largest among all vertices in $V(P_u) + u$.
	Let $f(X) = \bigcup_{x \in X}\{f(x)\}$ for a \RR{vertex-subset} $X \subseteq V(G)$.
	From the construction of $\Gp$, note that if $u$ and $v$ are two vertices of distance $2$ in $G$, one can always find a shortest path $Q$ between $f(u)$ and $f(v)$ whose length is exactly $d$.
	Indeed, $Q$ can be obtained by joining the paths from $f(u)$ to $u$, from $u$ to $x_{uw}$, from $x_{uw}$ to $x_{wv}$, from $x_{wv}$ to $v$, and from $v$ to $f(v)$, where $w \in N_G(u) \cap N_G(v)$.
	It follows that if $I$ is an independent set of $G$ then $f(I)$ is a distance-$d$ independent set of $\Gp$.
	Therefore, we can set $\Ip = f(I)$ and $\Jp = f(J)$.
	
	From the construction of $\Gp$, note that for each $uv \in E(G)$, $x_{uv}$ is of distance 
	exactly one from each $x_{wz}$ for $wz \in E(G) - uv$, 
	at most two from each $v \in V(G)$, 
	and at most $2 + (d-3)/2 \leq d - 1$ from each vertex in $P_v$ for $v \in V(G)$.
	It follows that any distance-$d$ independent set of $\Gp$ of size at least two must not contain any vertex in $\bigcup_{uv \in E(G)}\{x_{uv}\}$.
	
	We now show that there is a $\sfTJ$-sequence between $I$ and $J$ in $G$ if and only if there is a $\sfTJ$-sequence between $\Ip$ and $\Jp$ in $\Gp$.
	If $\vert I \vert = \vert J \vert = 1$, the claim is trivial.
	As a result, we consider the case $\vert I \vert = \vert J \vert \geq 2$.
	Since $f(I)$ is a distance-$d$ independent set in $\Gp$ if $I$ is an independent set in $G$, the only-if direction is clear.
	It remains to show the if direction.
	Let $\calS^\prime$ be a $\sfTJ$-sequence in $\Gp$ between $\Ip$ and $\Jp$.
	We modify $\calS^\prime$ by repeating the following steps:
	\begin{itemize}
		\item Let $x \reconf[\sfTJ]{\Gp} y$ be the first token-jump that move a token from $x \in f(V(G))$ to some $y \in V(P_u) + u - f(u)$ for some $u \in V(G)$. 
		If no such token-jump exists, we stop.
		Let $I_x$ and $I_y$ be respectively the distance-$d$ independent sets obtained before and after this token-jump.
		In particular, $I_y = I_x - x + y$.
		\item Replace $x \reconf[\sfTJ]{\Gp} y$ by $x \reconf[\sfTJ]{\Gp} f(u)$ and replace the first step after $x \reconf[\sfTJ]{\Gp} y$ of the form $y \reconf[\sfTJ]{\Gp} z$ by $f(u) \reconf[\sfTJ]{\Gp} z$.
		From the construction of $\Gp$, note that any path containing $f(u)$ must also contains all vertices in $V(P_u) + u$.
		Additionally, $I_x \cap (V(P_u) + u) = \emptyset$, otherwise no token in $I_x$ can jump to $y$. 
		Therefore, the set $I_x - x + f(u)$ is also a distance-$d$ independent set.
		Moreover, $\dist_{\Gp}(f(u), z) \geq \dist_{\Gp}(y, z)$ for any $z \in V(\Gp) - V(P_u)$.
		Roughly speaking, this implies that no token-jump between $x \reconf[\sfTJ]{\Gp} y$ and $y \reconf[\sfTJ]{\Gp} z$ breaks the ``distance-$d$ restriction''.
		Thus, after the above replacements, $\calS^\prime$ is still a $\sfTJ$-sequence in $\Gp$.
		\item Repeat the first step. 
	\end{itemize}
	After modification, the final resulting $\sfTJ$-sequence $\calS^\prime$ in $\Gp$ contains only token-jumps between vertices in $f(V(G))$.
	By definition of $f$, we can construct a $\sfTJ$-sequence between $I$ and $J$ in $G$ simply by replacing each step $x \reconf[\sfTJ]{\Gp} y$ in $\calS^\prime$ by $f^{-1}(x) \reconf[\sfTJ]{G} f^{-1}(y)$.
	Our proof is complete.
\end{proof}

Now, we consider the split graphs.
Proposition~\ref{prop:bounded-diameter} implies that on split graphs (where each component has diameter at most $3$), \textsc{D$d$ISR} is in $\ttP$ under $\sfR \in \{\sfTS, \sfTJ\}$ for any $d \geq 4$.
Interestingly, recall that when $d = 2$, the problem under $\sfTS$ is $\ttPSPACE$-complete even on split graphs~\cite{BelmonteKLMOS21} while under $\sfTJ$ it is in $\ttP$~\cite{KaminskiMM12}.
It remains to consider the case $d = 3$.

Observe that the constructed graph $\Gp$ in the proof of Theorem~\ref{thm:chordal-TJ-odd-d} is indeed a split graph when $d = 3$.
Therefore, we have the following corollary.
\begin{corollary}\label{cor:split-TJ}
	\textsc{D$3$ISR} is $\ttPSPACE$-complete on split graphs under $\sfTJ$.
\end{corollary}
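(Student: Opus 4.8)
The plan is to reuse the reduction from \textsc{ISR} under $\sfTJ$ that underlies \cref{thm:chordal-TJ-odd-d}, specialized to $d = 3$, and simply observe that in this case the constructed graph is split. First I would note that \textsc{D$3$ISR} lies in $\ttPSPACE$ because \textsc{\RR{D$3$IS}} is in $\ttNP$ (as remarked at the end of \cref{sec:preliminaries}), so only $\ttPSPACE$-hardness requires an argument.

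Next I would instantiate the construction of $\Gp$ from the proof of \cref{thm:chordal-TJ-odd-d} at $d = 3$. The only $d$-dependent ingredient there is the attached path $P_v$ on $(d-3)/2$ vertices; at $d = 3$ this path is empty, so $\Gp$ consists precisely of the original vertices $V(G)$ together with one new vertex $x_{uv}$ per edge $uv \in E(G)$, where each $x_{uv}$ is adjacent to both $u$ and $v$, and all the $x_{uv}$'s form a clique. Because the $P_v$'s vanish, the map $f$ becomes the identity, so we simply take $\Ip = I$ and $\Jp = J$.

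I would then verify that $\Gp$ is a split graph: its vertex set partitions into the clique $K = \{x_{uv} : uv \in E(G)\}$ and the independent set $V(G)$, since no two vertices of $V(G)$ are adjacent in $\Gp$ (all of their adjacencies pass through the $x_{uv}$'s). This is exactly the defining property of a split graph, and the reduction is clearly still computable in polynomial time.

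Finally, correctness is inherited verbatim from \cref{thm:chordal-TJ-odd-d}: that proof already establishes that \RR{$(G, I, J, \sfTJ)$} is a yes-instance of \textsc{ISR} if and only if \RR{$(\Gp, \Ip, \Jp, \sfTJ)$} is a yes-instance of \textsc{D$3$ISR}, and nothing in that argument requires the paths $P_v$ to be nonempty (they serve only to define $f$, which here is trivial). Since \textsc{ISR} under $\sfTJ$ is $\ttPSPACE$-complete~\cite{ItoDHPSUU11}, the claim follows. I do not expect a genuine obstacle; the only points that warrant care are double-checking the split partition above and confirming that the distance bookkeeping of \cref{thm:chordal-TJ-odd-d}---in particular, that no $x_{uv}$ can belong to a D$3$IS of size at least two---remains correct when $d = 3$.
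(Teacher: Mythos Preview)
Your proposal is correct and takes essentially the same approach as the paper: the paper's proof of \cref{cor:split-TJ} is a one-line observation that the graph $\Gp$ constructed in the proof of \cref{thm:chordal-TJ-odd-d} is a split graph when $d=3$, which is precisely what you spell out in more detail.
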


In contrast, under $\sfTS$, we have the following proposition.
\begin{proposition}\label{prop:split-TS}
	\textsc{D$3$ISR} is in $\ttP$ on split graphs under $\sfTS$.
\end{proposition}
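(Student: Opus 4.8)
The plan is to exploit the rigidity of distance-$3$ independent sets in a split graph: I will show that any such set with at least two tokens in the ``main'' component is frozen under $\sfTS$, which collapses the reconfiguration question into a few conditions checkable in linear time.

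First I would fix a split partition $V(G) = C \cup S$, where $C$ induces a clique and $S$ an independent set, and record the component structure. Since every non-isolated vertex of $S$ has all of its neighbors in $C$ and $C$ is a clique, all non-isolated vertices of $G$ lie in a single component $B$ (the one containing $C$), while every remaining vertex is isolated. I would then catalogue pairwise distances inside $B$: two vertices of $C$ are at distance $1$; a vertex of $C$ and a non-isolated vertex of $S$ are at distance at most $2$; and two non-isolated vertices of $S$ are at distance at most $3$, with equality exactly when their neighborhoods in $C$ are disjoint. Consequently, a D$3$IS meeting $B$ in at least two vertices must consist solely of vertices of $S$ with pairwise-disjoint clique-neighborhoods; in particular it contains no vertex of $C$.

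The key step is a frozen-ness lemma. I would argue that if a D$3$IS $I$ places at least two tokens in $B$, then no token of $I$ can slide. Indeed, by the structural analysis all such tokens sit on vertices of $S$, and a token on $u \in S \cap B$ can only move to a neighbor $w \in N_G(u) \cap C$; but then $w$ is a clique vertex at distance at most $2$ from any other $B$-token (which occupies a non-isolated vertex of $S$), so $I - u + w$ violates the distance-$3$ condition. This exhausts the available moves, so $I$ is frozen; tokens on isolated vertices plainly never move either. Writing $B_I = I \cap B$ and $B_J = J \cap B$, I then obtain the characterization that $(G,I,J,\sfTS)$ is a yes-instance if and only if $I \setminus B = J \setminus B$ and either $\lvert B_I \rvert \le 1$ or $B_I = B_J$. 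The only positive case needing further justification, $\lvert B_I \rvert = \lvert B_J \rvert = 1$, is settled by noting that a lone token in the connected component $B$ slides along a path to any target vertex of $B$: every intermediate set is trivially a D$3$IS, because the only other tokens are isolated and hence at distance $\infty$.

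Finally I would observe that computing $C$, $S$, and $B$, counting the tokens in each component, and comparing the isolated parts and the sets $B_I, B_J$ all take polynomial (indeed linear) time, giving membership in $\ttP$ and thereby completing the proof together with Proposition~\ref{prop:bounded-diameter}, which already disposes of $d \ge 4$. The main obstacle is not any single deduction but getting the distance bookkeeping inside $B$ exactly right in the borderline configurations---a single-vertex clique, two vertices of $S$ sharing a clique neighbor, and purely isolated instances (where $B = \emptyset$ forces $I = J$)---so that the dichotomy between the frozen regime $\lvert B_I \rvert \ge 2$ and the free regime $\lvert B_I \rvert \le 1$ is airtight.
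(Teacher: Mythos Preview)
Your proposal is correct and follows essentially the same approach as the paper: both arguments hinge on the observation that in a connected split graph any D$3$IS with at least two tokens lies entirely inside $S$ and is frozen, because sliding a token would put it on a clique vertex at distance at most $2$ from every other token. The paper disposes of disconnected inputs with a one-line ``solve each component independently'' reduction and then treats only the connected case, whereas you unfold the component structure explicitly (one main block $B$ plus isolated vertices) and state the resulting decision criterion; this is slightly more detailed but not a genuinely different route.
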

\begin{proof}
	Let \RR{$(G, I, J, \sfTS)$} be an instance of \textsc{D$3$ISR} and suppose that $V(G)$ can be partitioned into a clique $K$ and an independent set $S$.
	One can assume without loss of generality that $G$ is connected, otherwise each component can be solved independently.
	If $\vert I \vert = \vert J \vert = 1$, the problem becomes trivial: \RR{$(G, I, J, \sfR)$} is always a yes-instance.
	Thus, we now consider $\vert I \vert = \vert J \vert \geq 2$.
	Observe that for every $u \in V(G)$ and $v \in K$, we have $\dist_G(u, v) \leq 2$.
	Therefore, in this case, both $I$ and $J$ are subsets of $S$.
	\RR{Note that $I \neq J$.}
	Now, no token in $I \cup J$ can be slid, otherwise such a token must be slid to some vertex in $K$, and each vertex in $K$ has distance at most two from any other token, which contradicts the restriction that tokens must form a D$3$IS.
	Hence, \RR{$(G, I, J, \sfTS)$} is always a no-instance if $\vert I \vert = \vert J \vert \geq 2$.
\end{proof}

\begin{toappendix}
\section{A Reduction under $\sfTJ$ on General Graphs}%
\label{sec:general-TJ}

Recall that Ito~et~al.~\cite{ItoDHPSUU11} proved the $\ttPSPACE$-completeness of \textsc{ISR} under $\sfTJ/\sfTAR$ by reducing from \textsc{3-Satisfiability Reconfiguration (3SAT-R)}.
In this section, we present a simple proof for the $\ttPSPACE$-hardness of \textsc{D$d$ISR} ($d \geq 3$) on general graphs under $\sfTJ$ by \textit{reducing from \textsc{ISR} instead of \textsc{3SAT-R}}.

\begin{theorem}\label{thm:general-TJ}
	\textsc{D$d$ISR} is $\ttPSPACE$-complete under $\sfTJ$ for any $d \geq 3$.
\end{theorem}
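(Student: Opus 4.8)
The plan is to prove $\ttPSPACE$-hardness by a polynomial reduction from \textsc{ISR} under $\sfTJ$, which is $\ttPSPACE$-complete~\cite{ItoDHPSUU11}; membership in $\ttPSPACE$ is immediate because \textsc{D$d$IS} is in $\ttNP$. For $d = 3$ the hardness already follows from \cref{cor:split-TJ}, since split graphs are general graphs, so I only need a reduction for every fixed $d \geq 4$. Given an \textsc{ISR} instance $(H, I, J, \sfTJ)$ with $I, J$ independent sets of $H$ (we may assume $H$ is connected and has no isolated vertex), I construct $G'$ as follows: keep $V(H)$; for each edge $uv \in E(H)$ add a \emph{midpoint} $m_{uv}$ adjacent to both $u$ and $v$; add a single \emph{hub} $h$ adjacent to every midpoint; and to each $v \in V(H)$ attach a pendant path of length $a := \lfloor (d-3)/2 \rfloor$, letting $f(v)$ denote its far endpoint (so $f(v) = v$ when $a = 0$). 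Writing $f(X) = \{ f(x) : x \in X \}$, I set $I' = f(I)$ and $J' = f(J)$. Since $d$ is a fixed constant, the construction is polynomial in $\vert H \vert$.

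Correctness rests on a few distance computations in $G'$. For original vertices $u, v$ one has $\dist_{G'}(u,v) = 2$ if $uv \in E(H)$ (via $m_{uv}$) and $\dist_{G'}(u,v) = 4$ otherwise (via $u - m_e - h - m_{e'} - v$), so $\dist_{G'}(f(u), f(v)) = 2a + 2 < d$ for adjacent pairs and $2a + 4 \geq d$ for non-adjacent pairs. Consequently $f(W)$ is a D$d$IS of $G'$ exactly when $W$ is an independent set of $H$; in particular $I', J'$ are D$d$ISs. Next, every vertex of $G'$ lies within distance $2 + a < d$ of $h$ and within distance $3 + a < d$ of each midpoint (here $d \geq 4$ is used), so no D$d$IS of size at least two can contain $h$ or any midpoint. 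Finally, the region $V(P_v) + v$ has diameter $a < d$, so any D$d$IS carries at most one token in each pendant region.

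With these facts the two directions follow the template of \cref{thm:chordal-TJ-odd-d}. For the forward direction, a $\sfTJ$-sequence of independent sets $I = W_0, \dots, W_t = J$ in $H$ maps, vertex by vertex through $f$, to a $\sfTJ$-sequence of D$d$ISs $f(I), \dots, f(J)$ in $G'$. For the backward direction I start from a $\sfTJ$-sequence between $I'$ and $J'$ in $G'$; by the observations above each of its D$d$ISs (of size at least two) is supported on $f(V(H))$ together with interior pendant vertices, with at most one token per pendant region. I then repeatedly \emph{straighten} the sequence: whenever a token is moved onto an interior vertex $y$ of $v$'s pendant, I redirect that jump to $f(v)$ and likewise redirect the next jump leaving $y$. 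Because $f(v)$ is the farthest vertex of its region, $\dist_{G'}(t, f(v)) \geq \dist_{G'}(t, y)$ for every token $t$ outside that region, so each redirected set is still a D$d$IS; after finitely many steps all D$d$ISs are supported on $f(V(H))$, and projecting through $f^{-1}$ yields a $\sfTJ$-sequence between $I$ and $J$ in $H$. I expect the straightening step to be the main obstacle: one must verify that rerouting through $f(v)$ never violates the distance-$d$ constraint and never collides with another token, which is exactly where the ``one token per pendant region'' bound and the monotonicity of distances toward $f(v)$ are needed. The case $\vert I \vert = \vert J \vert = 1$ is handled trivially.
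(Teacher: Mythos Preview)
Your proposal is correct, and its overall strategy---reduce from \textsc{ISR} under $\sfTJ$, build a gadget graph in which independent sets of $H$ correspond (via $f$) to D$d$ISs of $G'$, then translate $\sfTJ$-sequences back and forth---is the same as the paper's. The distance checks you list are right: for $d\geq 4$ one has $2a+2\leq d-1$, $2a+4\geq d$, $2+a<d$, and $3+a<d$, so $f(W)$ is a D$d$IS iff $W$ is independent, and no D$d$IS of size $\geq 2$ can touch $h$ or a midpoint. The straightening step is sound for the reason you give: every path from a token $t$ outside the pendant region of $v$ to any vertex of that region must pass through $v$, so replacing $y$ by $f(v)$ can only increase distances; and the diameter bound $a<d$ guarantees $f(v)$ is unoccupied when you redirect.

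Where you differ from the paper is in the construction itself. The paper replaces each edge of $G$ by a path of length $d-1$ and then glues all the ``midpoints'' together (by a clique when $d$ is odd, by a common apex $x^\star$ when $d$ is even). The payoff is that size-$\geq 2$ independent sets of $G$ coincide \emph{exactly} with size-$\geq 2$ D$d$ISs of $G'$, so $\sfTJ$-sequences transfer verbatim with no pendant paths, no $f$-map, and no straightening; the proof is a single uniform argument for all $d\geq 3$. Your construction instead fixes a small core (one midpoint per edge plus a hub) and compensates for larger $d$ with pendant paths, which forces you to import the straightening machinery from the chordal proof and to treat $d=3$ separately via \cref{cor:split-TJ}. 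Both routes work; the paper's is more self-contained and avoids the case split, while yours is an economical reuse of an argument already developed in Section~\ref{sec:chordal-split}.
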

\begin{proof}
	We reduce from the \textsc{ISR} problem, which is known to be $\ttPSPACE$-complete under $\sfTJ$~\cite{ItoDHPSUU11}.
	Let \RR{$(G, I, J, \sfTJ)$} be an \textsc{ISR}'s instance.
	We construct a \textsc{D$d$ISR}'s instance \RR{$(\Gp, I, J, \sfTJ)$} ($d \geq 3$) as follows.
	Unless otherwise noted, we always assume $p = (d-1)/2$ if $d$ is odd and $p = (d - 2)/2$ if $d$ is even.
	To construct $\Gp$ from $G$, we first replace each edge $uv \in E(G)$ with a new path $P_{uv} = x_0^{uv}\dots x_{d-1}^{uv}$ of length $d-1$, where $x_0^{uv} = u$ and $x_{d-1}^{uv} = v$.
	We emphasize that the ordering of endpoints is important, i.e., $P_{vu} = x_0^{vu}\dots x_{d-1}^{vu}$ is the path where $x_i^{vu} = x_{d-1-i}^{uv}$ for $0 \leq i \leq d-1$.
	Basically, $P_{uv}$ and $P_{vu}$ describe the same path with different vertex-labels.
	If $d$ is odd, we create a new clique $K$ whose vertices are $\bigcup_{uv \in E(G)}\{x_p^{uv}\}$.
	If $d$ is even, we add a new vertex $x^\star$ and create an edge between it and every vertex in $\bigcup_{uv \in E(G)}\{x_p^{uv}, x_{p+1}^{uv}\}$.
	The resulting graph is $\Gp$.
	(For example, see \figurename~\ref{fig:reduce-ISR-DdISR}.)
	Clearly this construction can be done in polynomial time.
	
	\begin{figure}[ht]
		\centering
		\begin{adjustbox}{max width=\textwidth}
			\begin{tikzpicture}[every node/.style={circle, draw, thick, fill=white, minimum size=3mm}]
				\begin{scope}
					\foreach \i/\x/\y in {1/0/0,2/2/0,3/4/0,4/1/-1.5,5/3/-1.5} {
						\node[label={[label distance=-0.1cm]below:$\i$}] (\i) at (\x, \y) {};
					}
					\draw[thick] (1) -- (2) -- (3)-- (5) (1) -- (4) -- (2);
					\node[rectangle, draw=none, fill=none] at (2,-2.5) {$d = 2$};
				\end{scope}
				\begin{scope}[shift={(5,0.5)}]
					\foreach \i/\x/\y in {1/0/0,2/2/0,3/4/0,4/1/-1.5,5/3/-1.5} {
						\node[label={[label distance=-0.1cm]below:$\i$}] (\i) at (1.5*\x, 1.5*\y) {};
					}
					\draw[thick] (1) to[-] coordinate[pos=0.5] (x12)  (2) (1) to[-] coordinate[pos=0.5] (x14) (4) (2) to[-] coordinate[pos=0.5] (x23) (3) (2) to[-] coordinate[pos=0.5] (x24) (4) (3) to[-] coordinate[pos=0.5] (x35) (5);
					\foreach \i in {12,14,23,24,35} {
						\node[fill=gray, label={[label distance=-0.2cm]above:$x_1^{\i}$}] (\i) at (x\i) {};
					}
					\draw[thick] (12) -- (14)  (12) edge[bend left=30] (23) (12) -- (24) (12) edge[bend right=10] (35) (14) -- (24) (14) edge[bend right=30] (35) (14) edge[bend right=40] (23) (23) -- (24) (23) -- (35) (24) -- (35);
					
					\node[rectangle, draw=none, fill=none] at (1.5*2,-3) {$d = 3$};
				\end{scope}
				\begin{scope}[shift={(2,-4)}]
					\foreach \i/\x/\y in {1/0/0,2/2/0,3/4/0,4/1/-1.5,5/3/-1.5} {
						\node[label={[label distance=-0.1cm]below:$\i$}] (\i) at (2*\x, 2*\y) {};
					}
					\draw[thick] (1) to[-] coordinate[pos=0.33] (x121) coordinate[pos=0.66] (x122) (2) (1) to[-] coordinate[pos=0.33] (x141) coordinate[pos=0.66] (x142) (4) (2) to[-] coordinate[pos=0.33] (x231) coordinate[pos=0.66] (x232) (3) (2) to[-] coordinate[pos=0.33] (x241) coordinate[pos=0.66] (x242) (4) (3) to[-] coordinate[pos=0.33] (x351) coordinate[pos=0.66] (x352) (5);
					\node[rectangle, draw=none, fill=none] at (2*2,-4) {$d = 4$};
					
					\foreach \j in {1,...,2} {
						\foreach \i/\k in {12\j/12,14\j/14,23\j/23,24\j/24,35\j/35} {
							\ifthenelse{\i=142 \OR \i=242 \OR \i=352}{
								\node[fill=gray, label={[label distance=-0.1cm]below:$x_{\j}^{\k}$}] (\i) at (x\i) {};
							}{
								\node[fill=gray, label={[label distance=-0.2cm]above:$x_{\j}^{\k}$}] (\i) at (x\i) {};
							}
							
						}
					}
					\node[fill=gray, label={[label distance=-0.2cm]below:$x^\star$}] (xs) at (4,-3) {};
					\draw[thick] (121) -- (xs) (122) -- (xs) (142) edge[bend right=20] (xs) (141) -- (xs) (241) -- (xs) (242) -- (xs) (232) -- (xs) (231) -- (xs) (351) -- (xs) (352) -- (xs);
				\end{scope}
			\end{tikzpicture}
		\end{adjustbox}
		\caption{An example of constructing $\Gp$ from a given graph $G$ for some values of $d \in \{2,3,4\}$ (in case $d = 2$, we have $\Gp = G$). Vertices in $V(\Gp) - V(G)$ are marked with the gray color.}
		\label{fig:reduce-ISR-DdISR}
	\end{figure}
	
	Now, we show that an independent set of $G$ of size $k \geq 2$ is also a distance-$d$ independent set of $\Gp$ and vice versa.
	From the construction, for every $uv \in E(G)$, it follows that $\dist_{\Gp}(u, v) \leq d-1$ and therefore both $u$ and $v$ cannot be in the same distance-$d$ independent set of $\Gp$.
	Additionally, observe that for every $uv \notin E(G)$, any path between $u$ and $v$ in $\Gp$ must contain $x_p^{uw}$ and $x_p^{vz}$, for some $w \in N_G(u)$ and $z \in N_G(v)$.
	Thus, a shortest path between $u$ and $v$ in $\Gp$ must be of the form $x_0^{uw}\dots x_p^{uw}x_p^{vz}\dots x_0^{vz}$ if $d$ is odd and $x_0^{uw}\dots x_p^{uw}x^{\star}x_p^{vz}\dots x_0^{vz}$ if $d$ is even.
	One can verify that such a shortest path has length exactly $d$.
	Therefore, $\dist_{\Gp}(u, v) \geq d$.
	As a result, if $I$ is an independent set of $G$ (of size $k \geq 2$) then it is also a distance-$d$ independent set of $\Gp$.
	On the other hand, suppose that $\Ip$ is a distance-$d$ independent set of $\Gp$ of size $k \geq 2$.
	We claim that $\Ip$ does not contain any new vertex.
	Let $X = \bigcup_{uv \in E(G)}\{x_1^{uv}, \dots, x_{d-2}^{uv}\} \cup \{x^\star\}$ be the set of all new vertices.
	(Note that $x^\star$ only appears when $d$ is even.)
	To show that $\Ip \cap X = \emptyset$, we will show that the distance between a vertex in $X$ and any other vertex in $V(\Gp)$ is at most $d-1$.
	From the construction of $\Gp$, the distance between $x^\star$ (if exists) and any other vertex is at most $p + 1 = (d - 2)/2 + 1 \leq d-1$.
	It remains to show that one can always find a path of length at most $d-1$ between a new vertex $x = x_i^{uv} \in X - x^\star$ and a vertex $y = x_j^{wz} \in V(\Gp) - \{x, x^\star\}$, where $1 \leq i \leq d-2$, $0 \leq j \leq d-1$, and $uv, wz \in E(G)$.
	If $d$ is odd, such a path can be constructed by joining the paths from $x = x_i^{uv}$ to $x_p^{uv}$ (of length $\leq p-1$), from $x_p^{uv}$ to $x_p^{wz}$ (of length $\leq 1$), and from $x_p^{wz}$ to $y = x_j^{wz}$ (of length $\leq p$).
	If $d$ is even, such a path can be constructed by joining the paths from $x = x_i^{uv}$ to the vertex $a \in \{x_p^{uv}, x_{p+1}^{uv}\}$ closest to $x$ (of length $\leq p-1$), from $a$ to $x^\star$ (of length $\leq 1$), from $x^\star$ to the vertex $b \in \{x_p^{wz}, x_{p+1}^{wz}\}$ closest to $y$ (of length $\leq 1$), and from $b$ to $y = x_j^{wz}$ (of length $\leq p$).
	As a result, $\Ip \subseteq V(G)$, and from the construction of $\Gp$, it follows that $\Ip$ is an independent set of $G$ of size $k \geq 2$.
	
	We are now ready to show that \RR{$(G, I, J, \sfTJ)$} is a yes-instance of \textsc{ISR} if and only if \RR{$(\Gp, I, J, \sfTJ)$} is a yes-instance of \textsc{D$d$ISR}.
	Suppose that $\vert I \vert = \vert J \vert = k$.
	If $k = 1$, the claim is trivial. 
	Therefore, we consider $k \geq 2$.
	In this case, since any independent set of $G$ is also a distance-$d$ independent set of $\Gp$ and vice versa, it follows that any $\sfTJ$-sequence in $G$ is also a $\sfTJ$-sequence in $\Gp$ and vice versa.
	Our proof is complete.
\end{proof}
\end{toappendix}

\section{Extending Some Known Results for $d = 2$}%
\label{sec:extend-results}

In this section, we prove that several known results on the complexity of \textsc{D$d$ISR} for the case $d = 2$ can be extended for $d \geq 3$.

\begin{toappendix}
\subsection{General Graphs}%
\label{sec:general}

Ito~et~al.~\cite{ItoDHPSUU11} proved that \textsc{ISR} is $\ttPSPACE$-complete on general graphs under $\sfTJ/\sfTAR$.
Indeed, their proof uses only maximum independent sets, which implies that any token-jump is also a token-slide~\cite{BonsmaKW14}, and therefore the $\ttPSPACE$-completeness also holds under $\sfTS$.
We will show that the reduction of Ito~et~al.~\cite{ItoDHPSUU11} can be extended for showing the $\ttPSPACE$-completeness of \textsc{D$d$ISR} for $d \geq 3$.

\begin{theorem}\label{thm:general-TSTJ}
	\textsc{D$d$ISR} is $\ttPSPACE$-complete under $\sfR \in \{\sfTS, \sfTJ\}$ for any $d \geq 3$.
\end{theorem}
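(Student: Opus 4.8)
My plan is to reduce from \textsc{ISR} under the same token rule $\sfR$. As recalled just above, \textsc{ISR} is $\ttPSPACE$-complete under $\sfTJ$ on general graphs~\cite{ItoDHPSUU11}, and since those hard instances use only \emph{maximum} independent sets, every token-jump there is simultaneously a token-slide~\cite{BonsmaKW14}, so \textsc{ISR} is $\ttPSPACE$-complete under $\sfTS$ as well. I would fix such an instance $(H,I,J,\sfR)$ in which all configurations are maximum independent sets of size $k$, and output a \textsc{D$d$ISR} instance $(\Gp,\Ip,\Jp,\sfR)$ that stretches each adjacency of $H$ into a distance-$d$ constraint, with $\Ip,\Jp$ the images of $I,J$. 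For $\sfR=\sfTJ$ I would simply reuse the edge-subdivision of \cref{thm:general-TJ} (paths of length $d-1$, with the midpoint clique when $d$ is odd and the apex $x^\star$ when $d$ is even): there a vertex-set of size $\geq 2$ is a D$d$IS of $\Gp$ exactly when it is an independent set of $H$, and jumps transfer verbatim, so the $\sfTJ$ half is already \cref{thm:general-TJ}.

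The $\sfTS$ half is the substantive case, and here the parity gadget of \cref{thm:general-TJ} cannot be reused unchanged: by construction every new vertex lies within distance $d-1$ of \emph{all} others, so a token on an original vertex has no legal neighbour to slide onto, and the $\sfTS$-reconfiguration graph collapses. The remedy I would pursue is to keep the distance-$d$ separation between the ``selected'' vertices while equipping each edge $uv\in E(H)$ with a private sliding corridor along which the token representing $u$ can be advanced to the position representing $v$ one edge at a time, every intermediate D$d$IS still having size $k$. The design goal is that a single such transfer reproduce exactly one maximum-independent-set slide of $H$, and conversely that the only legal token movements in $\Gp$ be complete transfers of this kind; then the $\sfTS$-reconfiguration graphs of $\Gp$ and $H$ become isomorphic and the equivalence follows.

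The hard part is controlling the corridors so that they neither block legal slides nor enable illegal ones. On the one hand the selected endpoints of a corridor must be at distance strictly greater than $d$, so that a token can begin to move; on the other hand no interior corridor position may yield a fresh size-$k$ D$d$IS, since otherwise a token could ``park'' inside a corridor, vacate its original vertex, and unlock transitions with no counterpart in $H$---precisely the spurious reconfigurations that would make the reduction unsound. I expect to resolve this tension by exploiting maximality: I would try to show that any size-$k$ D$d$IS placing a token at an interior corridor vertex can be pushed, by sliding that token to the nearer corridor endpoint, to a size-$k$ D$d$IS on the designated vertices without ever violating the distance-$d$ condition, so that interior parking can always be normalised away and reachability among the designated configurations coincides with that of $H$. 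Making this normalisation survive for both parities of $d$---and in particular checking that it forbids the ``deadlock-breaking'' parking that a naive subdivision permits on short cycles such as $C_4$, where two maximum independent sets isolated under $\sfTS$ in $H$ could otherwise be linked in $\Gp$---is the technical heart of the proof. Once it is in place, translating maximum-independent-set slides of $H$ into corridor transfers in $\Gp$ and back yields both directions of the equivalence, and maximality delivers the result uniformly for $\sfTS$ and $\sfTJ$.
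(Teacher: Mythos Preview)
Your $\sfTJ$ half is fine and indeed coincides with \cref{thm:general-TJ}. The $\sfTS$ half, however, contains a genuine gap: you correctly identify that the subdivision-plus-clique gadget of \cref{thm:general-TJ} freezes all tokens under $\sfTS$, but your proposed ``corridor'' fix is not a construction, only a list of desiderata (endpoints far enough apart to begin a slide, yet interior positions offering no fresh size-$k$ D$d$IS, yet no deadlock-breaking parking on short cycles such as $C_4$). You yourself flag the tension between these requirements without resolving it, and in fact the first two are in direct conflict: if the corridor endpoints representing adjacent $u,v\in V(H)$ are at distance strictly greater than $d$ in $\Gp$, then both can be occupied simultaneously in a D$d$IS of $\Gp$, destroying the correspondence with independent sets of $H$. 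A normalisation lemma of the kind you sketch would have to contend with this, and you have not shown how.

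The paper bypasses the whole corridor problem by a much more structural observation. Rather than reducing from an arbitrary \textsc{ISR} instance, it goes back to the \emph{specific} $\ttPSPACE$-hard instances produced by Ito~et~al.'s \textsc{3SAT-R} reduction. In that graph $G$, every token lives in a tiny gadget component (a single edge for each variable, a clique of size at most three for each clause), and the only edges that run \emph{between} different components are the literal--negation links. The paper's modification is then trivial: replace each inter-component edge by a path of length $d-1$ and leave the gadget components untouched. Because a token can never leave its own component (this is an inductive argument already present for $d=2$ and survives the modification), every legal $\sfTS$-move in $\Gp$ is a slide \emph{inside} a component of size at most three, hence coincides with the corresponding $\sfTJ$-move, and vice versa. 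No corridors, no normalisation, no parity case analysis. The lesson is that the structure you needed---tokens confined to constant-size components so that $\sfTS=\sfTJ$ for free---was already present in the Ito~et~al.\ source instances; you noted that they are maximum-IS instances but did not exploit their finer component structure.
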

\begin{proof}
	Recall that Ito~et~al.~\cite{ItoDHPSUU11} proved the $\ttPSPACE$-hardness of the problem for $d = 2$ by reducing from a known $\ttPSPACE$-complete problem called \textsc{3-Satisfiability Reconfiguration (3SAT-R)}~\cite{GopalanKMP09}.
	A \textit{3SAT formula} $\varphi$ consists of $n$ variables $x_1, \dots, x_n$ and $m$ clauses $c_1, \dots, c_m$, each clause contains at most three literals (a literal is either $x_i$ or $\overline{x_i}$).
	A \textit{truth assignment} $\phi$ is a way of assigning either $1$ (true) or $0$ (false) to each $x_i$.
	A truth assignment $\phi$ \textit{satisfies a clause} if at least one literal is $1$ and it \textit{satisfies the formula $\varphi$} if all clauses are satisfied.
	In a \textsc{3SAT-R} instance $(\varphi, \phi_s, \phi_t)$, two satisfying assignments $\phi$ and $\phi^\prime$ are adjacent if they differ in assigning exactly one variable.
	The \textsc{3SAT-R} problem asks whether there is an adjacent satisfying assignments for $\varphi$ between two given satisfying assignments $\phi_s$ and $\phi_t$.
	We will slightly modify their reduction to prove the $\ttPSPACE$-hardness for $d \geq 3$.
	
	We first describe the reduction by Ito~et~al.~\cite{ItoDHPSUU11}.
	Let $\varphi$ be a given 3SAT formula. 
	We construct a graph $G$ as follows.
	For each variable $x_i$ ($1 \leq i \leq n$) in $\varphi$, we add an edge $e_{x_i}$ to the graph; its two endpoints are labeled $x_i$ and $\overline{x_i}$. 
	Then, for each clause $c_j$ ($1 \leq j \leq m$) in $\varphi$, we add a clique whose nodes correspond to literals in $c_j$. 
	Finally, we add an edge between two nodes in different components if and only if the nodes correspond to opposite literals.
	Ito~et~al.~\cite{ItoDHPSUU11} proved that there is a one-to-one correspondence between satisfying assignments for $\varphi$ and maximum independent sets of size $n+m$ in $G$: $n$ vertices are chosen from the endpoints of edges corresponding to the variables; a literal is $1$ (true) if the corresponding endpoint is chosen.
	They uses this observation to show that there is a sequence of adjacent satisfying assignments between $\phi_s$ and $\phi_t$ if and only if there is a $\sfTJ$-sequence between two corresponding size-$(n+m)$ independent sets of $G$.
	(Indeed, they proved under $\sfTAR$ rule.
	However, since only independent sets of size at least $n+m-1$ are used, any $\sfTAR$-sequence in $G$ can be converted into a $\sfTJ$-sequence~\cite{KaminskiMM12}, the result under $\sfTJ$ also holds.)
	
	\begin{figure}[!ht]
		\centering
		\begin{adjustbox}{max width=\textwidth}
			\begin{tikzpicture}[every node/.style={circle, draw, thick, fill=white, minimum size=3mm}]
				\node[label={[label distance=-0.2cm]above:$\overline{x_1}$}] (bx1) at (0,0) {};
				\node[label={[label distance=-0.2cm]above:$x_1$}] (x1) at (1,0) {};
				\node[label={[label distance=-0.2cm]above:$\overline{x_2}$}] (bx2) at (3,0) {};
				\node[label={[label distance=-0.2cm]above:$x_2$}] (x2) at (4,0) {};
				\node[label={[label distance=-0.2cm]above:$\overline{x_3}$}] (bx3) at (6,0) {};
				\node[label={[label distance=-0.2cm]above:$x_3$}] (x3) at (7,0) {};
				
				\node[label={[label distance=-0.2cm]below:$x_1$}] (c1x1) at (-0.5,-3) {};
				\node[label={[label distance=-0.25cm]below:$\overline{x_2}$}] (c1bx2) at (0.5,-3) {};
				\node[label={[label distance=-0.25cm]below:$\overline{x_1}$}] (c2bx1) at (3,-3) {};
				\node[label={[label distance=-0.2cm]below:$x_3$}] (c2x3) at (4,-3) {};
				\node[label={[label distance=-0.2cm]right:$x_2$}] (c2x2) at (3.5,-2) {};
				\node[label={[label distance=-0.25cm]below:$\overline{x_2}$}] (c3bx2) at (6.5,-3) {};
				\node[label={[label distance=-0.25cm]below:$\overline{x_3}$}] (c3bx3) at (7.5,-3) {};
				
				\draw[thick] (x1) -- (bx1) (x2) -- (bx2) (x3) -- (bx3) (c1x1) -- (c1bx2) (c2bx1) -- (c2x2) -- (c2x3) -- (c2bx1) (c3bx2) -- (c3bx3);
				
				\path[draw=none] (c1x1) to[-] coordinate[pos=0.33] (c1x1-1) coordinate[pos=0.66] (c1x1-2) (bx1) (c1bx2) to[-] coordinate[pos=0.33] (c1bx2-1) coordinate[pos=0.66] (c1bx2-2) (x2) (c2bx1) to[-] coordinate[pos=0.33] (c2bx1-1) coordinate[pos=0.66] (c2bx1-2) (x1) (c2x2) to[-] coordinate[pos=0.2] (c2x2-1) coordinate[pos=0.8] (c2x2-2) (bx2) (c2x3) to[-] coordinate[pos=0.33] (c2x3-1) coordinate[pos=0.66] (c2x3-2) (bx3) (c3bx2) to[-] coordinate[pos=0.33] (c3bx2-1) coordinate[pos=0.66] (c3bx2-2) (x2) (c3bx3) to[-] coordinate[pos=0.33] (c3bx3-1) coordinate[pos=0.66] (c3bx3-2) (x3) (c1x1) to[bend right=60] coordinate[pos=0.33] (c1x1-3) coordinate[pos=0.66] (c1x1-4) (c2bx1) (c2x3) to[bend right=60] coordinate[pos=0.33] (c2x3-3) coordinate[pos=0.66] (c2x3-4) (c3bx3) ;

				\draw[very thick, dotted] (c1x1) -- (bx1) (c1bx2) -- (x2) (c2bx1) -- (x1) (c2x2) -- (bx2) (c2x3) -- (bx3) (c3bx2) -- (x2) (c3bx3) -- (x3) (c1x1) edge[bend right=45] (c2bx1) (c2x3) edge[bend right=45] (c3bx3);
				
			\end{tikzpicture}
		\end{adjustbox}
		\caption{An example of constructing the graphs $G$ and $\Gp$ from a 3SAT formula $\varphi$ used in~\cite{ItoDHPSUU11} having three variables $x_1$, $x_2$, and $x_3$ and three clauses $c_1 = \{x_1, \overline{x_2}\}$, $c_2 = \{\overline{x_1}, x_2, x_3\}$, and $c_3 = \{\overline{x_2}, \overline{x_3}\}$. Paths of length $d-1$ are represented by dotted edges.}
		\label{fig:reduce-3SAT-DdISR-general}
	\end{figure}
	
	Our modification is simple. 
	Instead of joining two nodes in different components by an edge, we join them by a path of length $d-1$.
	Let $\Gp$ be the resulting graph.
	\figurename~\ref{fig:reduce-3SAT-DdISR-general} describes an example of constructing $\Gp$ from the same 3SAT formula used by Ito~et~al.~\cite{ItoDHPSUU11}.
	Indeed, when $d = 2$, our constructed graph $\Gp$ and the graph $G$ constructed by Ito~et~al. are identical.
	One can verify that any independent set of $G$ of size $n+m$ is also a D$d$IS of $\Gp$.
	Additionally, note that no token can ever leave its corresponding component.
	To see this, choose a token $t$ and assume inductively that the statement holds for all other tokens, then $t$ can never jump/slide to any vertex outside its corresponding component because some other token must be moved first in order to make room for $t$.
	Thus, any token-jump is also a token-slide.
	(Recall that each component is either a single vertex, an edge, or a triangle.)
	Hence, \RR{$(G, I, J, \sfR)$} is a yes-instance of \textsc{ISR} if and only if \RR{$(\Gp, I, J, \sfR)$} is a yes-instance of \textsc{D$d$ISR}, where $I$ and $J$ are size-$(n+m)$ independent sets of $G$ (which are also D$d$ISs of $\Gp$) and $\sfR \in \{\sfTS, \sfTJ\}$.
	Our proof is complete.
\end{proof}
\end{toappendix}

\subsection{Perfect Graphs}%
\label{sec:perfect}

In this section, we prove the $\ttPSPACE$-completeness of \textsc{D$d$ISR} ($d \geq 3$) on perfect graphs by extending the corresponding known result for \textsc{ISR} of Kami{\'n}ski~et~al.~\cite{KaminskiMM12}.
\begin{theorem}\label{thm:perfect-TSTJ}
	\textsc{D$d$ISR} is $\ttPSPACE$-complete on perfect graphs under $\sfR \in \{\sfTS, \sfTJ\}$ for any $d \geq 3$.
\end{theorem}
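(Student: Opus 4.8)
The plan is to extend the known $\mathtt{PSPACE}$-completeness of \textsc{ISR} on perfect graphs for $d=2$ (Kamiński~et~al.~\cite{KaminskiMM12}) to $d \geq 3$ by the same path-subdivision idea already used in the proofs of \cref{thm:chordal-TJ-odd-d} and \cref{thm:general-TSTJ}. Recall that Kamiński~et~al.\ establish $\mathtt{PSPACE}$-completeness by reducing \textsc{ISR} on general graphs to \textsc{ISR} on perfect graphs, and their reduced graph is built from the formula/instance so that its structure is controlled. The first step is to take an \textsc{ISR} instance $(G,I,J,\mathsf{R})$ on a \emph{perfect} graph (this is already $\mathtt{PSPACE}$-complete for $d=2$) and construct $\Gp$ by replacing each edge $uv \in E(G)$ with a path $P_{uv}$ of length $d-1$, exactly as in the proof of \cref{thm:general-TSTJ}, possibly adding the central clique (odd $d$) or apex vertex $x^\star$ (even $d$) to force the pairwise distance between any two original non-adjacent vertices to be exactly $d$.

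The second step is to verify the combinatorial correspondence: an independent set of $G$ of size $k \geq 2$ is a D$d$IS of $\Gp$ and vice versa, and consequently $(G,I,J,\mathsf{R})$ is a yes-instance of \textsc{ISR} iff $(\Gp,I,J,\mathsf{R})$ is a yes-instance of \textsc{D$d$ISR}. This reasoning is essentially identical to that already carried out in \cref{thm:general-TSTJ} (for $\mathsf{TJ}$) and can be adapted for $\mathsf{TS}$ using the ``no token can leave its component / some other token must move first'' argument; since the $I,J$ used are maximum (or large) independent sets, any token-jump realizing a reconfiguration is in fact a token-slide, so both rules coincide and the reduction works simultaneously for $\mathsf{R} \in \{\mathsf{TS},\mathsf{TJ}\}$.

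The genuinely new obstacle, and the heart of this theorem, is \textbf{preserving perfectness}: I must show that $\Gp$ is perfect whenever $G$ is. Subdividing edges does not in general preserve perfectness, and adding the clique on $\{x_p^{uv}\}$ (odd $d$) or the apex $x^\star$ (even $d$) could create odd holes or odd antiholes. The plan here is to appeal to the Strong Perfect Graph Theorem and argue that $\Gp$ contains no induced odd hole or odd antihole of length $\geq 5$: long even subdivision paths of length $d-1 \geq 2$ tend to break short odd cycles, and one checks that any induced cycle passing through subdivision vertices has even length (it uses two ``half-paths'' of equal length $p$ plus the connector). The antihole case is controlled because subdivision vertices have small degree and cannot participate in the dense complementary structure an antihole requires. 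I expect this perfectness verification—carefully enumerating how induced odd holes/antiholes could traverse the original vertices, the subdivision paths, and the central gadget—to be the main technical effort; the rest follows the template of the earlier reductions in the paper.
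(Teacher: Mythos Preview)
Your proposal has two genuine gaps, and both stem from using the construction of \cref{thm:general-TSTJ} as a black box.

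First, perfectness is \emph{not} preserved. Take $G=P_3=v_1v_2v_3$ (trivially perfect) and $d=5$. Replacing each edge by a path of length $4$ and turning the two midpoints $x_2^{12},x_2^{23}$ into a clique produces the induced $5$-cycle $x_2^{12}\,x_3^{12}\,v_2\,x_1^{23}\,x_2^{23}\,x_2^{12}$, so $\Gp$ is not perfect. Your Strong Perfect Graph Theorem plan cannot succeed here; the central clique together with the two ``half-paths'' of length $p$ on either side of $v_2$ is exactly an odd hole of length $2p+1=d$. Second, the $\mathsf{TS}$ case collapses entirely. In the construction of \cref{thm:general-TSTJ}, every new (subdivision/apex) vertex is within distance $d-1$ of every other vertex of $\Gp$; this is precisely what that proof uses to show that any D$d$IS of size $\geq 2$ lies in $V(G)$. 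But the only neighbours in $\Gp$ of an original vertex are new vertices, so with $\geq 2$ tokens no slide is ever legal, and every nontrivial $\mathsf{TS}$ instance becomes a no-instance. The ``tokens cannot leave their component / maximum IS'' argument does not rescue this: $\Gp$ is connected through the central gadget, and the issue is not where tokens end up but that they cannot move at all.

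The paper avoids both problems by \emph{not} starting from an arbitrary perfect graph. It goes back to the \textsc{Shortest Path Reconfiguration} source of Kami\'nski~et~al.\ and works with the specific layered graph $\Gp$ (layers $D_0,\dots,D_k$, each a clique, edges only between consecutive layers). It then subdivides only the \emph{inter-layer} edges into paths of length $d-1$ and turns each intermediate level $D_i^j$ into a clique. The result $\Gpp$ has the same ``sequence of cliques with edges only between consecutive cliques'' structure as $\Gp$, so perfectness follows for the same reason it held for $\Gp$; and because each layer is a clique and a size-$(k+1)$ D$d$IS must pick exactly one vertex per original layer $D_i$, tokens never leave their layer and every jump is a slide. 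The essential idea you are missing is to exploit the layered structure of the \emph{specific} Kami\'nski~et~al.\ instance rather than treating the perfect input as generic.
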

\begin{proof}
	Recall that Kami{\'n}ski~et~al.~\cite{KaminskiMM12} proved the $\ttPSPACE$-hardness of the problem for $d = 2$ by reducing from a known $\ttPSPACE$-complete problem called \textsc{Shortest Path Reconfiguration (SPR)}~\cite{Bonsma13}.
	In a \textsc{SPR}'s instance $(G, u, v, P, Q)$, two shortest $uv$-paths are adjacent if one can be obtained from the other by exchanging exactly one vertex.
	The \textsc{SPR} problem asks whether there is a sequence of adjacent shortest $uv$-paths between $P$ and $Q$.
	We will slightly modify their reduction to prove the $\ttPSPACE$-hardness for $d \geq 3$.
	
	We first describe the reduction by Kami{\'n}ski~et~al.~\cite{KaminskiMM12}.
	Let $G$ be a given graph and let $u, v \in V(G)$.
	We delete all vertices and edges that does not appear in any shortest $uv$-path and let the resulting graph be $\Gtil$.
	Let $k = \dist_G(u, v)$ and for each $i \in \{0, \dots, k\}$ let $D_i$ be the set of vertices in $\Gtil$ of distance $i$ from $u$ and $k-i$ from $v$.
	Let $\Gp$ be the graph obtained from $\Gtil$ by turning each $D_i$ into a clique and complementing (i.e., if there is an edge between two vertices, remove it; otherwise, create one) the edges of $\Gtil$ between two consecutive layers $D_i$ and $D_{i+1}$.
	Kami{\'n}ski~et~al.~\cite{KaminskiMM12} proved that $G^\prime$ is a perfect graph.
	They also claimed that there is a one-to-one correspondence between $uv$-shortest paths of $G$ and independent sets of size $k+1$ of $\Gp$: the $k+1$ vertices of a shortest $uv$-path in $G$ become independent in $\Gp$.
	This is used for showing that there is a sequence of adjacent $uv$-shortest paths in $G$ if and only if there is a $\sfR$-sequence between size-$(k+1)$ independent sets of $G^\prime$, for $\sfR \in \{\sfTS, \sfTJ, \sfTAR\}$.
	
	Our modification is simple.
	We construct a graph $\Gpp$ from $\Gp$ as follows.
	First, we replace each edge between two consecutive layers $D_i$ and $D_{i+1}$ by a new path of length $d-1$, and let the resulting graph be $\widetilde{\Gp}$.
	For each $j \in \{1, \dots, d-2\}$, let $D_i^j$ be the set of vertices in $\widetilde{\Gp}$ of distance $j$ from some vertex in $D_i$ and distance $d - 1 - j$ from some vertex in $D_{i+1}$.
	Let $\Gpp$ be the graph obtained from $\widetilde{\Gp}$ by turning each $D_i^j$ into a clique.
	Since $\Gp$ is perfect, the above construction implies that $\Gpp$ is perfect too.
	Clearly the construction of both $\Gp$ and $\Gpp$ can be done in polynomial time.
	In fact, if $d = 2$, we have $\Gpp = \Gp$.
	(For example, see \figurename~\ref{fig:reduce-SPR-DdISR-perfect}.)
	
	\begin{figure}[!ht]
		\centering
		\begin{adjustbox}{max width=\textwidth}
			\begin{tikzpicture}[every node/.style={circle, draw, thick, fill=white, minimum size=3mm}]
				\begin{scope}
					\foreach \i/\x/\y in {0/0/0,1/1/0,2/2/0,3/3/0,4/4/0,5/5/0,11/1/1,21/2/1,31/3/1,41/4/1,32/3/-1,42/4/-1} {
						\ifthenelse{\i=0}{
							\node[label=below:$u$] (\i) at (\x, \y) {};
						}{
							\ifthenelse{\i=5}{
								\node[label=below:$v$] (\i) at (\x, \y) {};
							}{
								\node (\i) at (\x, \y) {};
							}
						}
					}
					
					\draw[thick] (0) -- (1) -- (2) -- (3) -- (4) -- (5) (0) -- (11) -- (21) -- (31) -- (41) -- (5) (2) -- (32) -- (42) -- (5) (32) -- (4) (1) -- (21) (2) -- (31) (31) -- (42) (31) -- (4);
					
					\node[rectangle, draw=none, fill=none] at (2.5,-1.7) {$\Gtil$};
				\end{scope}
				\begin{scope}[shift={(6,0)}]
					\foreach \i/\x/\y in {0/0/0,1/1/0,2/2/0,3/3/0,4/4/0,5/5/0,11/1/1,21/2/1,31/3/1,41/4/1,32/3/-1,42/4/-1} {
						\ifthenelse{\i=0}{
							\node[label=below:$u$] (\i) at (\x, \y) {};
						}{
							\ifthenelse{\i=5}{
								\node[label=below:$v$] (\i) at (\x, \y) {};
							}{
								\node (\i) at (\x, \y) {};
							}
						}
					}
					
					\draw[thick] (11) -- (2) (21) -- (3) (21) -- (32) (3) -- (41) (3) -- (42) (32) -- (41);
					
					\node[rectangle, draw=none, fill=none] at (2.5,-1.7) {$\Gp$ ($d = 2$)};
					
					\begin{scope}[on background layer]
						\foreach \i in {0,...,5} {
							\node[rectangle, fill=gray!30!white, rounded corners, minimum width=5mm, minimum height=2.5cm, label={[label distance=-0.2cm]above:$D_{\i}$}] at (\i) {};
						}
					\end{scope}
				\end{scope}
				
				\begin{scope}[shift={(0,-4)}]
					\foreach \i/\x/\y in {0/0/0,1/1/0,2/2/0,3/3/0,4/4/0,5/5/0,11/1/1,21/2/1,31/3/1,41/4/1,32/3/-1,42/4/-1} {
						\ifthenelse{\i=0}{
							\node[label=below:$u$] (\i) at (2*\x, \y) {};
						}{
							\ifthenelse{\i=5}{
								\node[label=below:$v$] (\i) at (2*\x, \y) {};
							}{
								\node (\i) at (2*\x, \y) {};
							}
						}
					}
					
					\draw[thick] (11) to[-] coordinate[pos=0.5] (11-2) (2) (21) to[-] coordinate[pos=0.5] (21-3) (3) (21) to[-] coordinate[pos=0.5] (21-32) (32) (3) to[-] coordinate[pos=0.5] (3-41) (41) (3) to[-] coordinate[pos=0.5] (3-42) (42) (32) to[-] coordinate[pos=0.5] (32-41) (41);
					
					\foreach \i in {11-2, 21-3, 21-32, 3-41, 3-42, 32-41} {
						\node[fill=gray] at (\i) {};
					}
					
					\node[rectangle, draw=none, fill=none] at (2*2.5,-1.7) {$\Gpp$ ($d = 3$)};
					
					\begin{scope}[on background layer]
						\foreach \i in {0,...,5} {
							\node[rectangle, fill=gray!30!white, rounded corners, minimum width=5mm, minimum height=2.5cm, label={[label distance=-0.2cm]above:$D_{\i}$}] at (\i) {};
						}
						\path[draw=none] (1) to[-] coordinate[pos=0.5] (12) (2) to[-] coordinate[pos=0.5] (23) (3) to[-] coordinate[pos=0.5] (34) (4);
						\foreach \i/\j/\k in {1/1/12,2/1/23,3/1/34} {
							\node[rectangle, fill=gray!30!white, rounded corners, minimum width=5mm, minimum height=2cm, label={[label distance=-0.2cm]above:$D_{\i}^{\j}$}] at (\k) {};
						}
					\end{scope}
				\end{scope}
			\end{tikzpicture}
		\end{adjustbox}
		\caption{An example of constructing the perfect graphs $\Gp$ and $\Gpp$ from $\Gtil$. Vertices in a light-gray box form a clique. Vertices in $V(\Gpp) - V(\Gp)$ are marked with the gray color.}
		\label{fig:reduce-SPR-DdISR-perfect}
	\end{figure}
	
	It is sufficient to show that there is a $\sfR$-sequence of size-$(k+1)$ independent sets in $\Gp$ if and only if there is a $\sfR$-sequence of size-$(k+1)$ distance-$d$ independent sets in $\Gpp$, for $\sfR \in \{\sfTS, \sfTJ\}$.
	From the construction of $\Gpp$, it follows that any independent set of $\Gp$ of size $k+1$ is also a distance-$d$ independent set of $\Gpp$, since a shortest path of length $2$ in $\Gp$ between two vertices $x \in D_i$ and $y \in D_{i+1}$ becomes a shortest path of length $(d-1) + 1 = d$ in $\Gpp$.
	Thus, any $\sfTJ$-sequence in $\Gp$ between two size-$(k+1)$ independent sets is also a $\sfTJ$-sequence in $\Gpp$.
	On the other hand, note that a size-$(k+1)$ distance-$d$ independent set of $\Gpp$ must not contain any new vertex in $V(\Gpp) - V(\Gp)$, otherwise one can verify that the set must be of size at most $k$, which is a contradiction.
	(Note that a D$d$IS must contain at most one vertex in $\bigcup_{j=1}^{d-2}D_i^j$.)
	As a result, a size-$(k+1)$ distance-$d$ independent set of $\Gpp$ is also an independent set of $\Gp$.
	Thus, any $\sfTJ$-sequence in $\Gpp$ between two size-$(k+1)$ distance-$d$ independent sets is also a $\sfTJ$-sequence in $\Gpp$.
	We note that in both $\Gp$ and $\Gpp$, a token can never move out of the layer where it belongs, and therefore any token-jump is also a token-slide.
	Our proof is complete.
\end{proof}

\subsection{Planar Graphs}%
\label{sec:planar}

In this section, we claim that the $\ttPSPACE$-hard reduction of Hearn and Demaine~\cite{HearnD05} for \textsc{ISR} under $\sfTS$ can be extended to \textsc{D$d$ISR} ($d \geq 3$) under $\sfR \in \{\sfTS, \sfTJ\}$.
We now briefly introduce the powerful tool Hearn and Demaine~\cite{HearnD05,HearnD09} used as the source problem for proving several hardness results, including the $\ttPSPACE$-hardness of \textsc{ISR}: the \textit{Nondeterministic Constraint Logic (NCL)} machine.
\begin{figure}[!ht]
	\centering
	\begin{adjustbox}{max width=\textwidth}
		\begin{tikzpicture}
			\begin{scope}
				\node (v1) at (1,3) [circle, draw] {};
				\node (v2) at (5,3) [circle, draw] {};
				\node (v3) at (2,2) [circle, draw] {};
				\node (v4) at (4,2) [circle, draw] {};
				\node (v5) at (1,1) [circle, draw] {};
				\node (v6) at (5,1) [circle, draw] {};
				\draw (v1) -- node[midway, above left]{2} (v2) [ultra thick] {};
				\draw (v1) -- node[midway, above right]{2} (v3) [ultra thick] {};
				\draw (v1) -- node[midway, above left]{2} (v5) [ultra thick] {};
				\draw (v3) -- node[midway, above left]{2} (v4) [ultra thick] {};
				\draw (v3) -- node[midway, right]{2} (v5) [ultra thick] {};
				\draw (v5) -- node[midway, above left]{2} (v6) [ultra thick] {};
				\draw (v2) -- node[midway, left]{1} (v4)  {};
				\draw (v2) -- node[midway, above right]{1} (v6)  {};
				\draw (v4) -- node[midway, left]{1} (v6)  {};
				\draw (3.1,3.1) -- (2.9,3.0) -- (3.1,2.9) [ultra thick] {};
				\draw (2.9,2.1) -- (3.1,2.0) -- (2.9,1.9) [ultra thick] {};
				\draw (2.9,1.1) -- (3.1,1.0) -- (2.9,0.9) [ultra thick] {};
				\draw (0.9,2.1) -- (1.0,1.9) -- (1.1,2.1) [ultra thick] {};
				\draw (4.9,1.9) -- (5.0,2.1) -- (5.1,1.9) {};
				\draw (1.58,2.58) -- (1.65,2.35) -- (1.42,2.42) [ultra thick] {};
				\draw (1.42,1.58) -- (1.35,1.35) -- (1.58,1.42) [ultra thick] {};
				\draw (4.42,2.58) -- (4.65,2.65) -- (4.58,2.42) {};
				\draw (4.42,1.42) -- (4.65,1.35) -- (4.58,1.58) {};
				
				\node[rectangle, fill=none, draw=none] at (2.5,0.5) {(a)};
			\end{scope}
			\begin{scope}[shift={(7,1)}]
				\node (v0) at (1.0,1.0) [circle, draw] {};
				\node (v1) at (0.2,0.5) [label=above:1] {};
				\node (v2) at (1.8,0.5) [label=above:1] {};
				\node (v3) at (1.0,2.0) [label=below right:2] {};
				\draw (v0) -- (v1) {};
				\draw (v0) -- (v2) {};
				\draw (v0) -- (v3) [ultra thick] {};
				
				\node[rectangle, fill=none, draw=none] at (1,0.3) {(b)};
			\end{scope}
			\begin{scope}[shift={(10,1)}]
				\node (v0) at (1.0,1.0) [circle, draw] {};
				\node (v1) at (0.2,0.5) [label=above:2] {};
				\node (v2) at (1.8,0.5) [label=above:2] {};
				\node (v3) at (1.0,2.0) [label=below right:2] {};
				\draw (v0) -- (v1) [ultra thick] {};
				\draw (v0) -- (v2) [ultra thick] {};
				\draw (v0) -- (v3) [ultra thick] {};
				
				\node[rectangle, fill=none, draw=none] at (1,0.3) {(c)};
			\end{scope}
		\end{tikzpicture}
	\end{adjustbox}
	\caption{(a) A configuration of an NCL machine, (b) NCL \textsc{And} vertex, and (c) NCL \textsc{Or} vertex. \RR{Edges of weight $1$ (resp., $2$) are thin (resp., thick).}}%
	\label{fig:NCL}
\end{figure}
An NCL ``machine'' is an undirected graph whose \RR{edges are assigned with weight $1$ (thin) or $2$ (thick)}. 
An (\emph{NCL}) \emph{configuration} of this machine is an orientation of the edges such that the sum of weights of in-coming arcs at each vertex is at least two. 
\figurename~\ref{fig:NCL}(a) illustrates a configuration of an NCL machine. 
Two NCL configurations are \emph{adjacent} if they differ in a single edge direction. 
Given an NCL machine along with two configurations, Hearn and Demaine~\cite{HearnD05} proved that it is $\ttPSPACE$-complete to determine whether there exists a sequence of adjacent NCL configurations which transforms one into the other.  

Indeed, even when restricted to \emph{\textsc{And}/\textsc{Or} constraint graphs}, the above problem remains $\ttPSPACE$-complete.
An \emph{\textsc{And}/\textsc{Or} constraint graph} is a NCL machine that contains exactly two type of vertices: ``NCL \textsc{And} vertices'' and ``NCL \textsc{Or} vertices''.

More precisely, a vertex of degree three is called an \emph{NCL \textsc{And} vertex} if its three incident edges have weights $1$, $1$ and $2$. 
(See \figurename~\ref{fig:NCL}(b).)
The ``behavior'' of an NCL \textsc{And} vertex $u$ is similar to a logical \textsc{and}: 
the weight-$2$ edge can be directed outward for $u$ if and only if both two weight-$1$ edges are directed inward for $u$. 
Note that, however, the weight-$2$ edge is not necessarily directed outward even when both weight-$1$ edges are directed inward. 
A vertex of degree three is called an \emph{NCL \textsc{Or} vertex} if its three incident edges have weights $2$, $2$ and $2$. 
(See \figurename~\ref{fig:NCL}(c).)
The ``behavior'' of an NCL \textsc{Or} vertex $v$ is similar to a logical \textsc{or}: 
one of the three edges can be directed outward for $v$ if and only if at least one of the other two edges is directed inward for $v$. 

For example, the NCL machine in \figurename~\ref{fig:NCL}(a) is an \textsc{And}/\textsc{Or} constraint graph. 
\begin{theorem}\label{thm:planar-TSTJ}
	\textsc{D$d$ISR} is $\ttPSPACE$-complete under $\sfR \in \{\sfTS, \sfTJ\}$ on planar graphs of maximum degree three and bounded bandwidth for any $d \geq 2$.
\end{theorem}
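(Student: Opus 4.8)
The plan is to reuse the $d = 2$ hardness reduction from \textsc{Nondeterministic Constraint Logic} and apply the same edge-subdivision trick already used in \Cref{thm:general-TSTJ,thm:perfect-TSTJ}. Concretely, let $G$ be the planar, subcubic, bounded-bandwidth graph produced by the reduction of Hearn and Demaine, together with the strengthenings of van der Zanden and Wrochna~\cite{HearnD05,Zanden15,Wrochna18}, from an \textsc{And}/\textsc{Or} constraint graph, so that sequences of adjacent NCL configurations correspond to $\sfTS$-sequences (equivalently $\sfTJ$-sequences) of size-$k$ independent sets of $G$. I construct $\Gp$ from $G$ by replacing every edge $uv \in E(G)$ with a fresh path $P_{uv}$ of length $d-1$ (inserting $d-2$ internal vertices), and take the initial and target D$d$ISs to be the same vertex-sets $I, J \subseteq V(G)$ as in the $d = 2$ case.

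First I would verify the three structural requirements. Subdividing edges preserves planarity, and every inserted vertex has degree two while the original vertices keep their degree, so $\Gp$ remains planar and of maximum degree three. For bandwidth, a linear layout of $G$ of width $b$ can be turned into a layout of $\Gp$ by scaling each coordinate by a constant and threading the $d-2$ internal vertices of each edge into the gap spanned by its endpoints; since $d$ is fixed and each vertex is incident to at most three edges, this blows up the width only by a factor depending on $b$ and $d$, so $\Gp$ still has bounded bandwidth. The key distance computation is that $\dist_{\Gp}(u, v) = (d-1)\dist_G(u, v)$ for $u, v \in V(G)$, whence $\dist_{\Gp}(u, v) \geq d$ iff $\dist_G(u, v) \geq 2$ iff $uv \notin E(G)$ (using $d \geq 2$). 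Thus a vertex-set $S \subseteq V(G)$ is an independent set of $G$ iff it is a D$d$IS of $\Gp$, so $I$ and $J$ are indeed D$d$ISs of $\Gp$.

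Next I would establish the reconfiguration equivalence. The forward direction is direct: under $\sfTJ$ a jump $u \reconf[\sfTJ]{G} v$ is mimicked by the same jump in $\Gp$ (the landing vertex stays at distance $\geq d$ from the remaining tokens because it did so in $G$), and under $\sfTS$ a slide along $uv$ is mimicked by pushing the token all the way along $P_{uv}$ in $d-1$ consecutive slides, where a short calculation shows every intermediate position remains at distance $\geq d$ from all other (original-vertex) tokens, so each intermediate set is a valid D$d$IS. The crucial structural lemma for the converse is a \emph{blocking} property: if any internal vertex of $P_{uv}$ is occupied, then both $u$ and $v$, being at distance at most $d-2 < d$ from it, must be empty; moreover a token that has entered the interior of a subdivided edge cannot return to its origin $u$ once some $G$-neighbour $w$ of $u$ becomes occupied, since $\dist_{\Gp}(u, w) = d-1 < d$. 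This lets me \emph{project} any $\sfR$-sequence in $\Gp$ onto a $\sfR$-sequence in $G$: each maximal excursion of a token into a subdivided edge is either aborted (a no-op after projection) or forced to completion at the far endpoint (a single move of $G$), and reordering the resulting moves of independent regions yields a legal sequence in $G$.

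The main obstacle is precisely this converse/projection step, together with handling $\sfTS$ and $\sfTJ$ uniformly. Unlike the layered construction of \Cref{thm:perfect-TSTJ}, here tokens genuinely must traverse subdivision vertices, so I cannot simply argue that every size-$k$ D$d$IS of $\Gp$ avoids the new vertices; instead the blocking property must be leveraged to show that wandering onto subdivision vertices never lets the $\Gp$-instance reach a configuration that is unreachable in $G$. Under $\sfTJ$ I would additionally rule out ``teleporting'' a token onto an internal subdivision vertex or across the graph, arguing as in \Cref{thm:general-TSTJ} that the distance constraints keep each token confined to its local gadget region, so that every admissible jump coincides with a slide; this reduces the $\sfTJ$ analysis to the $\sfTS$ one and completes the reduction.
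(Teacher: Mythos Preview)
Your projection argument for the converse direction has a real gap: uniform $(d-1)$-subdivision does \emph{not} in general preserve $\sfTS$-reachability between D$d$ISs that lie in $V(G)$. Take $G=C_4$ on vertices $v_1v_2v_3v_4$. In $G$ the two size-$2$ independent sets $\{v_1,v_3\}$ and $\{v_2,v_4\}$ are frozen and hence mutually unreachable under $\sfTS$. After subdividing each edge to length $d-1$ (already for $d=3$ this yields $C_8$), the two tokens can simply rotate around the cycle while maintaining pairwise distance $\ge d$, so $\{v_1,v_3\}$ reaches $\{v_2,v_4\}$ in $\Gp$. Concretely for $d=3$: $\{v_1,v_3\}\to\{x,v_3\}\to\{x,y\}\to\{v_2,y\}\to\{v_2,v_4\}$, where $x$ is the subdivision vertex between $v_1,v_2$ and $y$ the one between $v_3,v_4$. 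This directly breaks your ``maximal excursion'' projection: both excursions complete, but no ordering of the projected moves $v_1\to v_2$ and $v_3\to v_4$ gives an intermediate IS of $G$. Your blocking observation is correct, but it is not strong enough to force excursions to be serialisable in $G$; the subdivision genuinely creates new reconfiguration freedom. Since the Hearn--Demaine graph contains cycles (through the \textsc{Or}-gadget triangle and across port edges), you cannot sidestep this by appealing to its specific structure without a substantially new argument.

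The paper avoids this issue by \emph{not} subdividing the whole graph. Instead it redesigns the NCL \textsc{And}/\textsc{Or} gadgets themselves: only a few internal edges are replaced by paths (of length $d-2$, not $d-1$), while each port edge remains a single edge. Because port tokens are still confined to a single edge, the $d=2$ invariant ``no port token ever leaves its port edge'' carries over verbatim, every token-jump coincides with a token-slide, and the gadget semantics are checked directly for general $d$. If you want to salvage your route, you would need a gadget-level argument of this kind rather than a black-box subdivision of the output graph.
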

\begin{proof}
	We note that Hearn and Demaine~\cite{HearnD05}'s proof can be applied with the vertex gadgets shown in \figurename~\ref{fig:ncl-gadgets-DdISR}. 
	Indeed, when $d = 2$, the gadgets in \figurename~\ref{fig:ncl-gadgets-DdISR} are exactly those used in~\cite{HearnD05}.
	We rephrase their proof here for the sake of completeness.
	From now on we consider only $\sfTS$ rule. 
	We will see later that in the constructed graph, any token-jump is also a token-slide, and therefore our reduction also holds for $\sfTJ$.
	The maximum degree will also be clear from the construction of gadgets and how they are joined.
	The final complexity result follows from the result of van~der~Zanden~\cite{Zanden15} which combines~\cite{HearnD05} and~\cite{Wrochna18}: NCL remains $\ttPSPACE$-complete even if an input NCL machine is planar and bounded bandwidth.
	
	The NCL \textsc{And} and \textsc{Or} vertex gadgets are constructed as in \figurename~\ref{fig:ncl-gadgets-DdISR}(a) and (b). 
	The edges that cross the dashed-line gadget borders are ``port'' edges. 
	A token on an outer port-edge vertex represents an inward-directed NCL edge, and vice versa. 
	Given an \textsc{And/Or} graph and configuration, we construct a corresponding graph for \textsc{D$d$ISR} under $\sfTS$, by joining together \textsc{And} and \textsc{Or} vertex gadgets at their shared port edges, placing the port tokens appropriately.
	
	\begin{figure}[!ht]
		\centering
		\begin{adjustbox}{max width=\textwidth}
			\begin{tikzpicture}[scale=0.75, every node/.style={circle, draw, thick, fill=white, minimum size=5mm, transform shape}]
				\begin{scope}
					\foreach \i/\x/\y in {1/0/1,2/0/0,3/-2/-3,4/2/-3,5/-2.5/-4,6/2.5/-4,7/-2.5/-5,8/2.5/-5} {
						\node (\i) at (\x, \y) {};
					}
					
					\draw[thick] (1) -- (2) (5) -- (7) (6) -- (8) (3) -- (5) (4) -- (6);
					\draw[thick, dotted] (2) -- (3) (3) -- (4) (2) -- (4);

					\foreach \i in {1,5,6} {
						\node[fill=black, minimum size=2mm] at (\i) {};
					}
					
					\draw[very thick, dashed, rounded corners=30] (0,1) -- ([shift={(-1,0)}]-2.5,-4.5) -- ([shift={(1,0)}]2.5,-4.5) -- cycle ;
					
					\node[rectangle, draw=none, fill=none] at (0,-5.5) {(a)};
				\end{scope}
				\begin{scope}[shift={(7.5,0.5)}]
					\foreach \i/\x/\y in {1/0/3,2/0/2,3/0/-1,4/0/-2,5/-1.0/-3,6/1.0/-3,7/-1.0/-6,8/1.0/-6,9/-1.0/-7,10/1.0/-7,11/-1.0/-8,12/1.0/-8} {
						\node (\i) at (\x, \y) {};
					}

					\draw[thick] (1) -- (2) (3) -- (4) -- (5) -- (6) (4) -- (6) (7) -- (9) -- (11) (8) -- (10) -- (12);
					\draw[thick, dotted] (2) -- (3) (5) -- (7) (6) -- (8);

					\foreach \i in {1,4,9,10} {
						\node[fill=black, minimum size=2mm] at (\i) {};
					}
					
					\draw[very thick, dashed, rounded corners=30] (0,3.2) -- ([shift={(-0.7,0)}]-2.5,-7.5) -- ([shift={(0.7,0)}]2.5,-7.5) -- cycle;
					
					\node[rectangle, draw=none, fill=none] at (0,-8.5) {(b)};
				\end{scope}
			\end{tikzpicture}
		\end{adjustbox}
		\caption{Vertex gadgets for \textsc{D$d$ISR} under $\sfTS$: (a) \textsc{And}, (b) \textsc{Or}. \RR{The dotted edges represent paths of length $d-2$.}}
		\label{fig:ncl-gadgets-DdISR}
	\end{figure}
	
	First, observe that no port token may ever leave its port edge. 
	Choosing a particular port edge $E$, if we inductively assume that this condition holds for all other port edges, then there is never a legal move outside $E$ for its token---another port token would have to leave its own edge first.
	
	The \textsc{And} gadget clearly satisfies the same constraints as an NCL \textsc{And} vertex; the upper token can slide in just when both lower tokens are slid out. 
	Likewise, the upper token in the \textsc{Or} gadget can slide in when either lower token is slid out---the internal token can then slide to one side or the other to make room. 
	It thus satisfies the same constraints as an NCL \textsc{Or} vertex.
	As a result, it follows that a sequence of adjacent NCL configurations in a given \textsc{And/Or} graph can indeed be transformed into a $\sfTS$-sequence in the constructed graph and vice versa.
	Since no port token ever leaves its port edge, it follows from the construction of gadgets that in the constructed graph any token-jump is also a token-slide, which means the theorem also holds for $\sfTJ$. 
	Our proof is complete.
\end{proof}

\section{Open Problem: Trees}%
\label{sec:trees-TS}

In this section, we discuss the problems on trees.
Since the power of a tree is a (strongly) chordal graph~\cite{LinS95,KearneyC98} and \textsc{ISR} on chordal graphs under $\sfTJ$ is in $\ttP$~\cite{KaminskiMM12}, Proposition~\ref{prop:results-inherit-from-ISR-TJ} implies that \RR{the following corollary holds}.
\begin{corollary}\label{cor:trees-TJ}
	\textsc{D$d$ISR} under $\sfTJ$ on trees is in $\ttP$ for any $d \geq 3$.
\end{corollary}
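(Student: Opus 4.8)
The plan is to invoke Proposition~\ref{prop:results-inherit-from-ISR-TJ} directly, taking $\calG$ to be the class of trees and $\calH$ to be the class of chordal graphs. First I would supply the structural hypothesis: for every tree $T$ and every $d \geq 2$, the power $T^{d-1}$ is chordal---indeed strongly chordal~\cite{LinS95,KearneyC98}, though chordality alone is what the argument needs. This is the crucial input that separates trees from arbitrary chordal graphs. As \cref{cor:chordal-TJ-even-d} and \cref{thm:chordal-TJ-odd-d} show, a general chordal graph has a chordal $(d-1)$th power only when $d-1$ is odd (equivalently $d$ even), which is exactly why the chordal case splits into a polynomial-time regime for even $d$ and a $\ttPSPACE$-complete regime for odd $d$; by contrast, for trees the chordality of $T^{d-1}$ is guaranteed for \emph{every} exponent, so no parity restriction survives.

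Second I would verify the polynomial-time hypothesis of the proposition: since chordal graphs are even-hole-free and \textsc{ISR} under $\sfTJ$ is solvable in polynomial time on even-hole-free graphs~\cite{KaminskiMM12}, \textsc{ISR} under $\sfTJ$ on $\calH$ is in $\ttP$. With both hypotheses established, Proposition~\ref{prop:results-inherit-from-ISR-TJ} immediately yields that \textsc{D$d$ISR} under $\sfTJ$ on trees is in $\ttP$. I would note in passing that the argument in fact covers all $d \geq 2$ (the case $d = 2$ reduces to the known result~\cite{KaminskiMM12} since $T^{1} = T$), so the restriction to $d \geq 3$ in the statement loses nothing.

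I expect no genuine obstacle here, because all of the reconfiguration content is already packaged inside Proposition~\ref{prop:results-inherit-from-ISR-TJ}, which established that under $\sfTJ$ the \textsc{D$d$ISR} instance \RR{$(T, I, J, \sfTJ)$} and the \textsc{ISR} instance \RR{$(T^{d-1}, I, J, \sfTJ)$} are equivalent. The only point demanding care is to invoke the power-of-a-tree result for the correct exponent $d-1$ and to record that it holds for all powers, which is precisely what makes the full range $d \geq 2$ accessible and, more interestingly, what explains why trees escape the odd-$d$ hardness that afflicts general chordal graphs.
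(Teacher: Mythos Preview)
Your proposal is correct and follows exactly the paper's own argument: invoke Proposition~\ref{prop:results-inherit-from-ISR-TJ} with $\calG$ the class of trees and $\calH$ the class of chordal graphs, using that every power of a tree is (strongly) chordal~\cite{LinS95,KearneyC98} and that \textsc{ISR} under $\sfTJ$ is in $\ttP$ on chordal (indeed even-hole-free) graphs~\cite{KaminskiMM12}. The additional commentary you give about why trees escape the odd-$d$ hardness of general chordal graphs is accurate and helpful exposition, though the paper states the corollary more tersely.
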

On the other hand, the complexity of \textsc{D$d$ISR} under $\sfTS$ for $d \geq 3$ remains unknown.
\begin{conjecture}\label{conj:TS-tree}
	\textsc{D$d$ISR} under $\sfTS$ on trees is in $\ttP$ for $d \geq 3$.
\end{conjecture}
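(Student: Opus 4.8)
The plan is to prove the conjecture by generalizing the polynomial-time algorithm of Demaine et al.~\cite{DemaineDFHIOOUY15} for $d = 2$, working intrinsically with the acyclic structure of the tree rather than through the graph-power viewpoint. Proposition~\ref{prop:TS-power-graph} shows that under $\sfTS$ we cannot simply pass to $T^{d-1}$ (which is known to be chordal) and invoke the positive results available there, as we did for $\sfTJ$ in Corollary~\ref{cor:trees-TJ}: a single slide in $T$ corresponds to moving along an edge of $T^{d-1}$, but $T^{d-1}$ has additional edges whose associated slides are forbidden in $T$. Hence the entire argument must be carried out inside $T$ itself, using that paths between vertices are unique and that distances are therefore easy to compute and to maintain.

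As a base case I would first settle the problem on paths. On a path no two tokens can ever exchange their relative order, so the left-to-right ordering of the tokens is invariant along any $\sfTS$-sequence. Consequently a $\sfTS$-sequence between $I$ and $J$ exists if and only if $I$ and $J$ induce the same ordering and the target positions can be realized by a monotone (leftmost-to-rightmost) sweep in which every consecutive pair of tokens stays at distance at least $d$; checking the ``room'' available between consecutive tokens and between the extreme tokens and the path endpoints is clearly doable in polynomial time.

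Building on the path case I would set up a bottom-up dynamic program on $T$ rooted at an arbitrary leaf. The central notion is that of a \emph{rigid} (locked) token, one that can never be slid, together with, for each movable token, a description of the set of vertices it can reach. On a tree the structural fact I would aim to prove is a \emph{local} rigidity certificate: a token can be slid toward a neighbor $v$ exactly when all tokens lying in the subtree beyond $v$ can be cleared to distance at least $d$ from $v$, and this clearing can be analyzed recursively because the branches hanging off any vertex interact only through that vertex. I would then define a canonical D$d$IS obtained by repeatedly pushing every movable token as far as the distance-$d$ constraint permits, and prove that $(T, I, J, \sfTS)$ is a yes-instance if and only if $I$ and $J$ reduce to the same canonical configuration; computing and comparing canonical forms would then yield the polynomial-time algorithm.

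The hard part will be the non-locality introduced by the distance-$d$ constraint. Unlike the case $d = 2$, where two tokens obstruct each other only when adjacent, for $d \geq 3$ the legal moves of a token depend on every other token within distance roughly $d$, so chains of mutually blocking tokens (exactly as in the no-instance of Proposition~\ref{prop:TS-power-graph}) can arise. The main technical obstacle is therefore to show that, despite this coupling, rigidity still admits a purely subtree-local certificate and that the greedy ``push to canonical form'' procedure never deadlocks in a way invisible to the dynamic program; equivalently, to establish a confluence-type lemma guaranteeing that the canonical configuration is well defined and reachable whenever it ought to be. Ruling out the possibility that freeing one blocked token forces an irreversible bad move elsewhere is where I expect the bulk of the effort to lie.
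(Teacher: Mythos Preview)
The statement you are trying to prove is a \emph{conjecture} in the paper, not a theorem: the paper does not contain a proof of it and explicitly lists it as open. What the paper does do is (i) extend the rigid-token characterization of Demaine et~al.\ to arbitrary $d$ (Lemma~\ref{lem:TId-rigid}), exactly along the lines you sketch, and then (ii) exhibit a concrete tree instance (Figure~\ref{fig:exa-no-instance-trees-TS}) in which $\calR(T,I,d)=\calR(T,J,d)=\emptyset$ and yet $(T,I,J,\sfTS)$ is a no-instance for every $d\ge 3$. This is precisely the point at which the Demaine et~al.\ strategy breaks down: for $d=2$, absence of rigid tokens already guarantees a yes-instance, whereas for $d\ge 3$ it does not.

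Your proposal follows the same thread the paper already explores and stops at the same obstacle. The ``canonical D$d$IS obtained by repeatedly pushing every movable token'' is exactly the step that fails on the paper's counterexample: there every token is movable, each can be pushed arbitrarily far along its own branch, and the two sides $I$ and $J$ are symmetric, so any reasonable canonical form would coincide---yet the instance is still a no-instance because no token can ever cross the center vertex $u$. Your confluence/canonical-form idea would therefore have to encode reachability information that goes beyond per-token movability and beyond ``how far can each token be pushed,'' and nothing in the proposal indicates what that extra invariant would be. In short, what you have written is a plausible research plan that matches the paper's own discussion, but it is not a proof, and the paper supplies the explicit obstruction showing why the plan, as stated, is incomplete.
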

Demaine~et~al.~\cite{DemaineDFHIOOUY15} showed that the problem for $d = 2$ can be solved in linear time.
Their algorithm is based on the so-called \textit{rigid tokens}.
Given a tree $T$ and a D$d$IS $I$ of $T$ ($d \geq 2$), a token $t$ on $u \in I$ is \textit{$(T, I, d)$-rigid} if there is no $\sfTS$-sequence in $T$ that slides $t$ from $u$ to some vertex $v \in N_T(u)$.
We denote by $\calR(T, I, d)$ the set of all vertices where $(T, I, d)$-rigid tokens are placed.
Demaine~et~al. proved that $\calR(T, I, 2)$ can be found in linear time.
(In general, deciding whether a token is $(G, I, 2)$-rigid on some graph $G$ is $\ttPSPACE$-complete~\cite{Fox-EpsteinHOU15}.)
Clearly, it holds for any $d \geq 2$ that every \textsc{D$d$ISR}'s instance \RR{$(T, I, J, \sfTS)$} where $\calR(T, I, d) \neq \calR(T, J, d)$ is a no-instance.
When $\calR(T, I, 2) = \calR(T, J, 2) = \emptyset$, they proved that \RR{$(T, I, J, \sfTS)$} is always a yes-instance of \textsc{ISR}.
Based on these observations, one can derive a polynomial-time algorithm for solving \textsc{ISR} on trees under $\sfTS$.

Indeed, one can extend the recursive characterization of $(T, I, 2)$-rigid tokens given by Demaine~et~al. as follows.
For a given vertex $u$ of a graph $G$ and an integer $s \geq 0$, we denote by $N^s_G(u)$ the \textit{$s$-neighborhood of $u$ in $G$} which contains all vertices of distance \textit{exactly} $s$ from $u$ in $G$.
Additionally, we denote by $N^s_G[u] = \bigcup_{i=0}^{s}N^i_G(u)$ the set of all vertices of distance at most $s$ from $u$ in $G$ and called it the \textit{s-closed neighborhood of $u$ in $G$}.
For example, $N^0_G(u) = \{u\}$, $N^1_G(u) = N_G(u)$, and $N^1_G[u] = N_G[u]$.
For two vertices $u, v$ of a tree $T$, we denote by $T^u_v$ the tree induced by $v$ and its descendants when regarding $u$ as the root of $T$.
\begin{lemma}\label{lem:TId-rigid}
	Let $I$ be a D$d$IS of a tree $T$ for $d \geq 2$ and let $u \in I$.
	The token $t$ on $u$ is $(T, I, d)$-rigid if and only if
	\begin{enumerate}[(a)]
		\item $\vert V(T) \vert = \vert \{u\} \vert = 1$; or
		\item $\vert V(T) \vert \geq 2$ and for each $v \in N_T(u)$ there exists a vertex $w \in (N^{d-1}_T(v) - u) \cap I$ such that the token $t_w$ on $w$ is $(T^v_w, I \cap V(T^v_w), d)$-rigid. 
	\end{enumerate}
\end{lemma}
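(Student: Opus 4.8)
The plan is to prove both directions simultaneously by induction on $\vert V(T) \vert$, after rewriting rigidity in terms of a single ``first move''. Root $T$ at $u$. Since the first token-slide applied to $t$ must send it to a neighbor, $t$ is \emph{non}-rigid if and only if there is a neighbor $v \in N_T(u)$ together with a $\sfTS$-sequence that keeps $t$ on $u$ throughout and reaches a D$d$IS in which $t$ may legally slide from $u$ to $v$, i.e.\ one in which every other token lies at distance at least $d$ from $v$. The whole statement thus reduces to characterizing, for each fixed $v$, when this ``clearing of $v$'' is achievable, and the base case $\vert V(T) \vert = 1$ is immediate, giving condition~(a).

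First I would pin down which tokens can obstruct the slide $u \to v$. Because $u \in I$ and $\dist_T(u,v)=1$, any token $z \in I$ with $\dist_T(v,z) \le d-2$ would satisfy $\dist_T(u,z) \le d-1 < d$, contradicting that $I$ is a D$d$IS; hence no token sits strictly inside the ball, and the only obstructers are the tokens at distance \emph{exactly} $d-1$ from $v$. The same inequality forbids such a token on the $u$-side of $v$, so every obstructer lies in the subtree $T^u_v$ and belongs to $(N^{d-1}_T(v)-u)\cap I$. Finally, any two such vertices $w,w'$ are incomparable, so their descendant-subtrees $T^v_w$ and $T^v_{w'}$ are disjoint; moreover, for $b\in V(T^v_w)$ and $b'\in V(T^v_{w'})$ the path from $b$ to $b'$ passes through $w$ and $w'$, giving $\dist_T(b,b') \ge \dist_T(w,w') \ge d$, so distinct branches never interfere and may be cleared in parallel.

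The heart of the argument is to show that clearing a single obstructer $w$ is \emph{exactly} the negation of $(T^v_w, I\cap V(T^v_w), d)$-rigidity of the token on $w$, and two observations (valid for all $d\ge 2$) make this work. \emph{Confinement}: while $t$ stays on $u$, the token on $w$ cannot slide to the parent of $w$, which lies at distance $d-2$ from $v$ and hence at distance $d-1<d$ from $u$, clashing with $t$; since escaping $T^v_w$ upward would require first occupying that parent, the token is confined to $T^v_w$, where $w$'s only neighbors are its children. Thus clearing $w$ (reaching distance $\ge d$ from $v$) means sliding it one step deeper to a child of $w$, precisely a first move witnessing non-rigidity inside $T^v_w$. \emph{Lifting}: for every $b\in V(T^v_w)$ and every $z\in I\setminus V(T^v_w)$ the path to $z$ leaves through $w$, so $\dist_T(b,z)=\dist_T(b,w)+\dist_T(w,z)\ge \dist_T(w,z)\ge d$; hence any $\sfTS$-sequence using only vertices of $T^v_w$ is valid in $(T,I,d)$ if and only if it is valid in $(T^v_w, I\cap V(T^v_w), d)$. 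Combining confinement, lifting, and the branch-disjointness above, $t$ can slide to $v$ if and only if \emph{every} $w\in(N^{d-1}_T(v)-u)\cap I$ is non-rigid in its own subtree; negating this over all $v \in N_T(u)$ yields exactly condition~(b), and the inductive hypothesis identifies subtree-non-rigidity with the recursive clause.

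I expect the main obstacle to be rigorously ruling out ``up-and-over'' escape routes for an obstructer, i.e.\ making the confinement claim airtight: one must verify that a trapped token cannot gain freedom by first sliding some ancestor token out of the way, which is where the identity $\dist_T(x,u)=\dist_T(x,v)+1$ along the $v$--$u$ branch, together with the standing presence of $t$ on $u$, does the decisive work. The same care is needed to guarantee that the parallel clearing of several branches and the inductive appeal to the strictly smaller trees $T^v_w$ never re-introduce a distance-$d$ violation across subtrees; once confinement and lifting are in place, these follow from the two displayed inequalities and the induction closes routinely.
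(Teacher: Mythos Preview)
Your proof is correct and takes essentially the same approach as the paper: both directions rest on a confinement argument (no token can cross $\mathrm{parent}(w)$ while $t$ sits on $u$, since $\dist_T(\mathrm{parent}(w),u)=d-1$), a lifting/restriction argument (moves inside $T^v_w$ are oblivious to tokens outside because every $z\in I\setminus V(T^v_w)$ satisfies $\dist_T(b,z)\ge \dist_T(w,z)\ge d$ for all $b\in V(T^v_w)$), and pairwise disjointness of the subtrees $T^v_w$ to treat branches independently. Note that the induction on $|V(T)|$ you set up is not actually needed---condition~(b) refers directly to rigidity in the subtree instance, not to a recursive application of the lemma---so your closing remark about ``the inductive hypothesis identifying subtree-non-rigidity with the recursive clause'' can simply be dropped.
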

\begin{proof}
	Our proof is analogous to the one given by Demaine~et~al.~\cite[Lemma~1]{DemaineDFHIOOUY15}.
	For the sake of completeness, we present it here.
	Part (a) is trivial.
	We prove part (b).
	Before going into more details, we note that for any $w \in (N^{d-1}_T(v) - u)$, if $w \notin V(T^u_v)$ then $\dist_T(w, u) \leq d - 2$ and therefore $w \notin I$.
	In other words, $(N^{d-1}_T(v) - u) \cap I \subseteq V(T^u_v)$.
	\begin{itemize}
		\item[($\Leftarrow$)] Suppose to the contrary that there exists a $\sfTS$-sequence that slides $t$ to some vertex $v \in N_T(u)$.
		On the other hand, from the assumption, there exists a vertex $w \in (N^{d-1}_T(v) - u) \cap I$ such that the token $t_w$ on $w$ is $(T^v_w, I \cap V(T^v_w), d)$-rigid. 
		Since $\dist_T(u, w) = d$, $t_w$ must be moved before $t$ and therefore can only be moved to some vertex in $T^v_w$.
		However, since the token $t_w$ on $w$ is $(T^v_w, I \cap V(T^v_w), d)$-rigid, this cannot be done.
		The reason is that if there is a $\sfTS$-sequence $\calS$ in $T$ that slides $t_w$, we can ``restrict'' it to a $\sfTS$-sequence in $T^v_w$ by taking the intersection of each independent set in $\calS$ with $V(T^v_w)$, which implies $t_w$ is $(T^v_w, I \cap V(T^v_w), d)$-movable.
		
		\item[($\Rightarrow$)] Suppose that $|V(T)| \geq 2$.
			We claim that if there exists $v \in N_T(u)$ such that either $(N^{d-1}_T(v) - u) \cap I = \emptyset$ or for every $w \in (N^{d-1}_T(v) - u) \cap I$, the token $t_w$ on $w$ is $(T^v_w, I \cap V(T^v_w), d)$-movable then there exists a $\sfTS$-sequence $\calS$ in $T$ that moves the token $t$ on $u$ to $v$. 
			If $(N^{d-1}_T(v) - u) \cap I = \emptyset$, there are no tokens of distance $d - 1$ from $v$ except $t$, and therefore we can immediately slide $t$ to $v$.
			Thus, it remains to consider the case for every $w \in (N^{d-1}_T(v) - u) \cap I$, the token $t_w$ on $w$ is $(T^v_w, I \cap V(T^v_w), d)$-movable.
			Since $t_w$ is $(T^v_w, I \cap V(T^v_w), d)$-movable, there exists a $\sfTS$-sequence $\calS_w$ in $T^v_w$ that slides $t_w$ to some vertex in $N_{T^v_w}(w)$.
			Moreover, note that for any D$d$IS $I_w$ of $T^v_w$, we have $I_w \cup (I^\prime - V(T^v_w))$ is also a D$d$IS of $T$, for any D$d$IS $I^\prime$ of $G$ satisfying $I^\prime \cap V(T^v_w) = I \cap V(T^v_w)$.
			(Intuitively, we move tokens ``away from $w$'' inside $T^v_w$. 
			These moves do not break any ``distance-$d$ restriction'' between a token inside $T^v_w$ and a token outside $T^v_w$.)
			In other words, one can ``extend'' $\calS_w$ to form a $\sfTS$-sequence in $T$ by simply taking the union of each D$d$IS in $\calS_w$ with $I^\prime - V(T^v_w)$.
			Thus, for each $w \in (N^{d-1}_T(v) - u) \cap I$, we can perform $\calS_w$ to slides $t_w$ to some vertex in $N_{T^v_w}(w)$ and finally slide $t$ from $u$ to $v$.
			It remains to verify that for any pair of distinct vertices $w, w^\prime \in (N^{d-1}_T(v) - u) \cap I$ we can indeed perform $\calS_w$ and $\calS_{w^\prime}$ independently.
			To see this, note that by definition the subtrees $T^v_w$ and $T^{v}_{w^\prime}$ are distinct, i.e., $V(T^v_w) \cap V(T^v_{w^\prime}) = \emptyset$.
			Moreover, since both $w, w^\prime$ are in $I$, there distance must be at least $d$.
	\end{itemize}
\end{proof}
Using this characterization, one can naturally derive a recursive algorithm that decides if the token on $u \in I$ is $(T, I, d)$-rigid in $O(n)$ time, where $n = \vert V(T) \vert$.
Therefore, one can find $\calR(T, I, d)$ in $O(n^2)$ time.

However, for $d \geq 3$, even when $\calR(T, I, d) = \calR(T, J, d) = \emptyset$, \RR{$(T, I, J, \sfTS)$} may be a no-instance of \textsc{D$d$ISR}.
An example of such instances is described in \figurename~\ref{fig:exa-no-instance-trees-TS}.
\begin{figure}[ht]
	\centering
	\begin{adjustbox}{max width=\textwidth}
		\begin{tikzpicture}[every node/.style={circle, draw, thick, fill=white, minimum size=5mm}]
			\foreach \i/\x/\y in {1/0/0, 2/1/0, 3/2/0, 4/3/0, 5/5/0, 6/6/0, 7/7/0, 8/8/0, 9/0/-2, 10/1/-2, 11/2/-2, 12/3/-2, 13/5/-2, 14/6/-2, 15/7/-2, 16/8/-2, 17/4/-1} {
				\ifthenelse{\i=17}{
					\node[label=below:$u$] (\i) at (\x, \y) {};
				}{
					\node (\i) at (\x, \y) {};
				}
			}
			\foreach \i in {1, 9} {
				\node[minimum size=3mm, fill=black] at (\i) {}; 
			}
			\foreach \i in {8, 16} {
				\node[minimum size=3mm, fill=gray] at (\i) {}; 
			}
			\draw[thick] (1) -- (2) (3) -- (4) (5) -- (6) (7) -- (8) (9) -- (10) (11) -- (12) (13) -- (14) (15) -- (16) (17) -- (4) (17) -- (5) (17) -- (12) (17) -- (13);
			\draw[thick, dotted] (2) -- (3) (6) -- (7) (10) -- (11) (14) -- (15);
		\end{tikzpicture}
	\end{adjustbox}
	\caption{A no-instance \RR{$(T, I, J, \sfTS)$} of \textsc{D$d$ISR} ($d \geq 3$) with $\calR(T, I, d) = \calR(T, J, d) = \emptyset$. Tokens in $I$ (resp. $J$) are marked with the black (resp. gray) color. All tokens are of distance $d-1$ from $u$.}
	\label{fig:exa-no-instance-trees-TS}
\end{figure}
We remark that the instance described in \figurename~\ref{fig:exa-no-instance-trees-TS} does not apply for $d = 2$, because in that case no tokens can move.
As a result, characterizing these no-instances may be crucial in proving Conjecture~\ref{conj:TS-tree}.
\RA{Additionally, a first step toward proving Conjecture~\ref{conj:TS-tree} may be to consider \textsc{D$d$ISR} under $\sfTS$ on \textit{caterpillars}---a subclass of trees. Here a \textit{caterpillar} is a tree having a \textit{backbone path} containing all vertices of degree at least $3$ (called \textit{backbone vertices}).
Each \textit{hair} of the caterpillar is a path starting from a backbone vertex and not containing any other backbone vertex. 
For example, the graph in \figurename~\ref{fig:exa-no-instance-trees-TS} is a caterpillar with the backbone path having exactly one vertex $u$ (whose degree is at least $3$) and four hairs of length exactly $d-1$ starting from $u$.}

\section*{Acknowledgements}

\RA{We thank the anonymous reviewers for their valuable comments and suggestions which help to improve earlier versions of this paper.}
This research is partially supported by the Japan Society for the Promotion of Science (JSPS) KAKENHI Grant Number JP20H05964 (AFSA).

\printbibliography

\end{document}